\newcommand{\hnew}[1]{\textcolor{red}{#1}}
\definecolor{lightgray}{RGB}{224,224,224}
\newtheorem{theorem}{Theorem}
\newtheorem{exmp}{Example}
\newtheorem{corollary}{Corollary}
\newtheorem{definition}{Definition}
\newtheorem{remark}{Remark}
\newtheorem{proposition}{Proposition}
\newtheorem{lemma}{Lemma}
\begin{document}
\title{Majority-Logic Decoding of Binary Locally Recoverable Codes: A Probabilistic Analysis}

\author{%
  \IEEEauthorblockN{Hoang Ly, Emina Soljanin}
  \IEEEauthorblockA{
  Rutgers University, USA}
                     \{\texttt{mh.ly;emina.soljanin}\}@rutgers.edu
  \and
  \IEEEauthorblockN{Philip Whiting}
  \IEEEauthorblockA{
Macquarie University, Australia }
                     philipawhiting@gmail.com
}
\maketitle
\begin{abstract}
   Locally repairable codes (LRCs) were originally introduced to enable efficient recovery from erasures in distributed storage systems by accessing only a small number of other symbols. While their structural properties—such as bounds and constructions—have been extensively studied, the performance of LRCs under random erasures and errors has remained largely unexplored. In this work, we study the error- and erasure-correction performance of binary linear LRCs under majority-logic decoding (MLD). Focusing on LRCs with fixed locality and varying availability, we derive explicit upper bounds on the probability of decoding failure over the memoryless Binary Erasure Channel (BEC) and Binary Symmetric Channel (BSC). Our analysis characterizes the behavior of the bit-error rate (BER) and block-error rate (BLER) as functions of the locality and availability parameters. We show that, under mild growth conditions on the availability, the block decoding failure probability vanishes asymptotically, and that majority-logic decoding can successfully correct virtually all of error and erasure patterns of weight linear in the blocklength. The results reveal a substantial gap between worst-case guarantees and typical performance under stochastic channel models.
\end{abstract}

\section{Introduction}
Locally repairable codes (LRCs) are a class of error-correcting codes designed to enable the recovery of individual code symbols by accessing only a small subset of other symbols, known as \emph{local recovery sets} (also referred to as repair groups). This property, formalized through the notions of \emph{locality} and \emph{availability}, makes LRCs particularly attractive for distributed storage systems, where node failures lead to data loss modeled as erasures~\cite{Gopalan}. By providing multiple, disjoint recovery sets for each symbol, LRCs allow local repair with low bandwidth and latency overheads, and have therefore received significant attention in recent years, with applications extending beyond distributed storage~\cite{PIR_LRC_equivalent}.

Although LRCs were originally introduced to combat erasures, the presence of multiple recovery sets also suggests potential robustness to \emph{errors}. In practical storage and memory systems, failures are often better modeled as random errors---such as bit flips---rather than erasures, reflecting data corruption at the level of memory elements~\cite{MLD_LDPC}. Such bit flips may arise from unavoidable physical effects and remain a concern even in highly reliable systems, including safety-critical applications~\cite{cosmic_rays_aircraft_2025}. When errors are present, corrupted symbols may invalidate some recovery sets, thereby degrading the reliability of local recovery. This observation motivates the use of \emph{majority-logic decoding} (MLD), a classical decoding paradigm first introduced for Reed--Muller (RM) codes~\cite{reed1954class,Massey1963Threshold}. Under MLD, each recovery set produces a local estimate (a \emph{vote}) for the target symbol, and the decoder outputs the majority of these votes (with an arbitrary tie-breaking rule when needed)~\cite{One_step_RM}. In the binary case, each local estimate is a parity computation and can be implemented using \textsf{XOR} operations. \textcolor{black}{More broadly, highly efficient erasure-coding families such as Tornado codes and fountain codes (e.g., LT and Raptor codes) rely on sparse parity constraints and iterative decoding procedures whose computations are mostly just \textsf{XOR} operations, enabling low-complexity (often linear- or near-linear-time) encoding and decoding on erasure channels~\cite{Tornado_codes,LT_code,Raptor_code}.}

Majority-logic decoding is a simple, hard-decision decoding method with low complexity and extremely small decoding delay, which historically made it attractive for applications such as deep-space communication and memory protection \cite{Class_Of_MLDcodes,MLD_memory,OS_MLD_memory_protection}. Despite major advances in decoding algorithms and hardware capabilities, there has been a renewed interest in such low-complexity decoders for latency-critical applications, where hard-decision algorithms are often preferred over soft-decision alternatives \cite{RMdecoding:journals/tcom/BertramHH13}. Recently, MLD has also been connected to combinatorial designs, and to the concept of \emph{Service Rate Region}, a metric that quantifies the efficiency of linear codes for optimizing data access~\cite{SRR_Design:lySV2025}. Moreover, MLD is not a bounded-distance decoder and is known to frequently correct more errors than other bounded-distance decoders \cite{RMdecoding:journals/tit/Chen71}. For instance, over long binary RM codes of fixed rate, MLD can correct most error patterns of weight up to $d\log d/4$, where $d$ denotes the minimum distance~\cite{Krichevskiy1970RM}. For certain code families, such as Grassmann codes, MLD remains the only known method enabling efficient and practical decoding \cite{Grassmannian_code_MJ}. Codes that admit such decoding algorithms are referred to as \emph{majority-logic decodable codes}~\cite{Coding:books/PetersonW72}.

Surprisingly, despite the natural compatibility between LRCs and MLD, existing works have almost exclusively focused on the use of LRCs for erasure recovery. Even when robustness to errors is considered—for example, allowing recovery of an erased symbol in the presence of errors in parity-check symbols~\cite{local-error-correction}—the explicit use of LRCs to correct random errors has not been systematically studied. In particular, no prior work has analyzed the performance of LRCs under standard probabilistic error models, such as the Binary Symmetric Channel (BSC), using MLD. Most of the existing literature on LRCs concentrates on code constructions and fundamental limits, including bounds on minimum distance, optimal length, symbol size, and channel capacity~\cite{Capacity_LRC}. From an MLD perspective, the availability property of LRCs—namely, the presence of multiple recovery sets per symbol—naturally suggests increased robustness, as it enables majority voting across independent recovery equations and thus greater tolerance to erasures and errors. However, despite this intuition, the efficiency of LRCs against random erasures and errors—quantified in terms of decoding failure probability, bit-error rate (BER), or block-error rate (BLER)—has remained largely unexplored. In contrast, for classical code families such as Reed--Muller codes, where majority-logic decoding can be used efficiently, such probabilistic performance analyses were established quite some time ago~\cite{Krichevskiy1970RM,soft_decision_Dumer}. While Reed--Muller codes can be viewed as early examples of codes with local recovery structure, they are not typically designed or employed for modern distributed storage systems.

\textcolor{black}{We note that LRCs constitute a broadly defined class of codes characterized primarily by their local recovery properties—namely, locality and, in some cases, availability—rather than by a specific algebraic construction such as a generator or parity-check matrix. This distinguishes LRCs from classical code families, for which decoding algorithms are often tailored to explicit algebraic structures (e.g., the Berlekamp--Welch algorithm for Reed--Solomon codes or the Berlekamp--Massey decoder for BCH codes). As a result, there is currently no canonical, structure-specific decoding algorithm for general LRCs. In this sense, MLD emerges as a natural and broadly applicable decoding framework for LRCs, as it relies solely on the presence of multiple local recovery sets inherent to the codes' definition.}

\subsection*{Contributions}
\begin{itemize}
    \item \textbf{MLD for error correction in LRCs:}
We extend the use of binary linear locally recoverable codes—originally designed for erasure recovery in distributed storage—to error correction via majority-logic decoding (MLD), a simple, practical, and low-complexity hard-decision decoder originally developed for Reed–Muller codes. In contrast to the joint-typicality decoding approach presented in~\cite{Capacity_LRC}, MLD does not require knowledge of the channel law and is well suited for latency- and complexity-constrained settings.

\item \textbf{Probabilistic performance analysis under BSC and BEC:}
We initiate a systematic probabilistic analysis of the error- and erasure-correction performance of LRCs under MLD over the memoryless Binary Symmetric Channel (BSC) and Binary Erasure Channel (BEC). We derive explicit bounds on the probability of decoding failure as functions of the locality $r$ and availability $t$.

\item \textbf{Typical-case decoding guarantees beyond worst-case bounds:}
Moving beyond deterministic, worst-case guarantees based on minimum distance or availability, we characterize the typical decoding behavior of LRCs under random errors and erasures. Our results quantify how many errors and erasures can be corrected with high probability using MLD and reveal a substantial gap between worst-case adversarial limits and stochastic performance.

\item \textbf{Generalization of classical Reed–Muller results:}
Our analysis generalizes classical probabilistic decoding results for Reed–Muller codes~\cite{Krichevskiy1970RM} as special cases, while addressing a previously unexplored aspect of LRC performance under MLD.
\end{itemize}
This paper is organized as follows. Section~\ref{sec:LRC_definition} introduces binary LRCs with availability and reviews the MLD framework. Section~\ref{sec:analysis} presents a probabilistic analysis of the decoding performance of LRCs under MLD over the BSC and BEC, including bounds on BER and BLER, and weight of correctable error and erasure patterns in the asymptotic regimes of block length $n$. Section~\ref{sec:simulation} provides simulation results that validate the theoretical analysis and illustrate the impact of availability scaling. Section~\ref{sec:conclusion} concludes the paper and discusses directions for future work.

In the sequel, \(\log\) denotes the base-2 logarithm, $\ln$ denotes the natural logarithm, and \(\exp(x)\) denotes \(2^{x}\) while $\mathrm{e}$ denotes the Euler's constant. The binary field is denoted as \(\mathbb{F}_2\), and binary linear code \(\mathcal{C}\) with parameters \([n, k, d]_2\) is a \(k\)-dimensional subspace of the \(n\)-dimensional vector space \(\mathbb{F}_2^n\) with minimum Hamming distance $d$. Set of positive integers not exceeding \(i\), namely $\{1, 2,\dots, i\}$, is denoted as \( \lbrack i \rbrack \).

\section{Locally Recoverable Codes with Availability}
\label{sec:LRC_definition}
In this section, we will introduce the Locally Recoverable Codes with Availability, and its decoding method, namely Majority-logic decoding.
\subsection{Locally Recoverable Codes with Availability}
We follow the conventional definitions of binary linear LRCs with
availability established in, for instance,~\cite{Binary_LRC,Bounds_LRC}.
\begin{definition}
\label{defin:LRC}
The $i$-th code symbol of an $[n, k, d]_2$ binary linear code \(\mathscr{C} \subseteq \mathbb{F}_2^n\) is said to have locality $r$ and availability $t$ if, for every codeword $\mathbf{c} = (c_1, c_2, \ldots, c_n) \in \mathscr{C}$, there exist $t$ pairwise disjoint recovery sets (also referred to as repair sets) $R^{1,i},\, R^{2,i},\, \ldots,\, R^{t,i} \subseteq [n] \setminus \{i\}$ such that:
\begin{enumerate}
    \item Each set has cardinality at most $r$, i.e., $|R^{j,i}| \le r, \ \forall\, j \in [t]$, and
    \item For each recovery set $R^{j, i}$, the symbol $c_i$ can be recovered via the sum of the symbols in that set, i.e., $c_i = \sum\limits_{u \in R^{j,i}}c_u, \, \forall\, j \in [t]$, where the sum is over $\mathbb{F}_2$.
\end{enumerate}
\end{definition}

\begin{definition}
    The linear code $\mathscr{C}$ is said to have \emph{all-symbol} locality $r$ and availability $t$ if every code symbol $i \in [n]$ has locality $r$ and availability $t$. We refer to such a code $\mathscr{C}$ as a linear $(r, t)_a$-LRC.
\end{definition}
The properties of LRCs imply that every codeword symbol can be recovered from \(t\) pairwise disjoint local recovery sets, or equivalently, from \(t\) independent \emph{votes}. This structural recovery naturally motivates us to use \emph{majority-logic decoding (MLD)}, which was first introduced by Reed in 1954 for Reed--Muller codes~\cite{reed1954class,Massey1963Threshold}, where the final decision on each symbol is determined by the majority of its local votes. Codes that can be decoded using MLD are called majority-logic decodable codes. Recently, authors in~\cite{Binary_LRC} presented a construction of a class of long binary LRC with availability based on majority-logic decodable codes. 

In this write-up, we focus on a class of binary linear LRCs \(\mathscr{C} \subseteq \mathbb{F}_2^n\) in which: (1) all recovery sets have the same size $r$, where $r$ is fixed, and (2) the number of disjoint recovery sets $t$ grows with the code length $n$ (as in the case of Simplex codes). We justify that the assumption of constant recovery set size (1) is both valid and reasonable, as used commonly in the literature, for example, in~\cite{Bounds_LRC,LRC_download_time_analysis}. Indeed, this assumption not only helps simplify our analysis but is practically solid. If a recovery set of size $s < r$ exists, it can always be formally treated as a set of size $r$ by add into it other $r-s$ available symbols among the remaining. Since these extra objects do not participate in the actual recovery process, the (pairwise) disjointness of the recovery sets is preserved. Moreover, in hardware/VLSI (Very-Large-Scale Integration) systems, it is significantly more efficient to design, implement, and maintain circuits with a uniform number of inputs rather than supporting varying input sizes~\cite{harris2013digital}, especially as the number of required circuits scales. Such codes are often referred to as \emph{LRCs with large availability} and have attracted significant practical interest~\cite{LRC_large_availability}. Explicit constructions of binary linear LRCs with fixed locality and availability growing with $n$ have been presented, for example, in~\cite{LRC_large_availability} (linear growth) and~\cite{LRC_availablity_construction,Bounds_LRC}. Authors in~\cite{LRC_scaling_locality} presented a construction of a class of LRC with scaling locality and availability that required large field sizes. Authors in~\cite{arbitrary_locality_availability} presented a construction of a class of binary code that achieves arbitrary locality and availability. Interested readers are referred to~\cite{Codes_distributed_storage} for a comprehensive and detailed review.
\subsection{Majority-Logic Decoding}
Suppose the original codeword \(\mathbf{c}\) is transmitted over a communication channel and the received (possibly corrupted) sequence is denoted by $\mathbf{u} = \mathbf{c} + \mathbf{e}$, where \(\mathbf{u} = (u_1, u_2, \ldots, u_n) \in \mathbb{F}_2^n\), and $\mathbf{e}\in \mathbb{F}_2^{n}$ is a binary noise vector. For a fixed \(i\) and let \(R^{j,i}\) be one of the recovery sets for symbol $c_i$, the decoder forms a local estimate \(\hat{c}_i^{(j)}\) of \(c_i\) using
\[
\hat{c}_i^{(j)} = \sum_{h \in R^{j,i}} u_h, \quad \forall \, j \in [t].
\]
In practice, these summations over $\mathbb{F}_2$ can be implemented efficiently using $\textsf{XOR}$ gates. Finally, a global decision on \(c_i\) is made by taking the majority vote among the local estimates:
\[
\hat{c}_i =
\begin{cases}
1, & \text{if a majority (i.e., at least half) of } \hat{c}_i^{(j)} = 1, \\[4pt]
0, & \text{otherwise.}
\end{cases}
\]
\begin{exmp}
Consider a $[7,\,3]_2$ Simplex code whose length-$7$ codewords
$\mathbf{c}=(c_1,c_2,c_3,c_4,c_5,c_6,c_7)$ lie in the row space of the generator matrix
\[
\mathbf{G}=
\begin{bmatrix}
0 & 0 & 0 & 1 & 1 & 1 & 1\\
0 & 1 & 1 & 0 & 0 & 1 & 1\\
1 & 0 & 1 & 0 & 1 & 0 & 1
\end{bmatrix}.
\]
In this code, the symbol $c_1$ admits three disjoint recovery sets and can be written as
\begin{equation}
\label{eq:hamming_votes}
c_1 = c_2 + c_3 = c_4 + c_5 = c_6 + c_7 \quad (\text{over } \mathbb{F}_2).
\end{equation}
Equivalently, $R^{1,1} = \{2,\, 3\},\, R^{2,1} = \{4,\, 5\},\, R^{3,1} = \{6,\, 7\}$.
Suppose the received vector $\mathbf{u}$ differs from the transmitted codeword $\mathbf{c}$ due to channel errors. Since the recovery sets in~\eqref{eq:hamming_votes} are pairwise disjoint, each bit error affects at most one recovery set. For example, if only $u_2$ is flipped while all other symbols are received correctly, then the estimate $u_2+u_3$ disagrees with $u_4+u_5$ and $u_6+u_7$. A majority vote among
\[
u_2+u_3,\quad u_4+u_5,\quad u_6+u_7
\]
correctly recovers $c_1$. Moreover, even if two bit errors occur within the same recovery set—for instance, $u_2=c_2+1$ and $u_3=c_3+1$—the corresponding estimate remains correct:
\[
u_2+u_3=(c_2+1)+(c_3+1)=c_2+c_3 \quad (\text{over } \mathbb{F}_2).
\]
\color{black} In contrast, an error pattern of the same weight affecting two distinct recovery sets (e.g., errors at $u_2$ and $u_4$, i.e., $\mathbf{e}=0101000$) causes the majority vote to output an incorrect estimate of $c_1$. This highlights that a \emph{given} recovery set produces an incorrect vote if and only if it contains an odd number of errors.

Such an error pattern contributes to the bit error rate (BER) because the decoded value of $c_1$ is incorrect, and it also contributes to the block error rate (BLER) under symbol-wise decoding, since a block is declared correct only when all symbols are decoded correctly.

\end{exmp}

In general, a local recovery set containing an \emph{even} number of errors yields a correct vote, whereas a recovery set containing an \emph{odd} number of errors yields an incorrect vote. Thus, the correctness of each vote depends only on the parity of the number of errors within the recovery set, not on the total number of errors. Consequently, majority-logic decoding can successfully correct certain error patterns that conventional bounded-distance decoding methods cannot, despite relying solely on local parity constraints. This observation motivates our probabilistic analysis of majority-logic decoding for $(r,t)_a$ LRCs over channels with independent bit flips (also called crossovers) and erasures, in terms of BER, BLER, and the number of correctable errors and erasures. \textcolor{black}{On a related note, another example of codes where error correction capability depends on error pattern and not on the number of errors itself is Single parity-check Product codes~\cite{product_code}.}

\textcolor{black}{Following~\cite{RM_recursive_Dumer}, to define their error-correcting performance, we use the following definition. Given an infinite sequence of codes $A_i(n_i, d_i)$, we say that a decoding algorithm $\Psi$ has a \emph{threshold} sequence $\delta_i$ and a \emph{residual} sequence $\epsilon_i$ if for $n_i \rightarrow \infty$, $\Psi$ correctly decodes all but a vanishing fraction of error
patterns of weight $\delta_i(1-\epsilon_i)$ or less.}


\section{Analysis of Decoding Capability of LRCs}
\label{sec:analysis}
In this section, we present two complementary approaches for analyzing the error- and erasure-correction capability of LRCs under majority-logic decoding (MLD): a deterministic approach and a probabilistic approach. The former characterizes worst-case (adversarial) correction guarantees, while the latter focuses on typical performance when errors and erasures occur randomly and independently. Our main results are derived under this probabilistic channel model.

\subsection{Deterministic versus Probabilistic Perspectives}
\paragraph{Deterministic approach.}
Typically, the correction capability of an LRC is characterized via minimum distance or availability under an adversarial (i.e., worst case) error model. In the worst case, even if \(t-1\) recovery sets are completely lost (for example, each containing at least one erasure), the symbol can still be recovered using the remaining intact set. Hence, deterministically, an \((r,t)_a\)-LRC can tolerate up to \(t-1\) erasures, irrespective of the channel statistics~\cite{multiple_erasures}. This represents the \emph{worst-case} or \emph{adversarial} erasure-correction capability of the code. Similarly, the code can tolerate at most \(\lfloor (t-1)/2 \rfloor\) errors, since in this case the majority of local parity votes remain correct, allowing MLD to successfully recover the target symbol.

However, such worst-case bounds are often pessimistic and not particularly informative in practice. \textcolor{black}{In many systems, bursty or dependent error patterns can be mitigated by \emph{interleaving}, which spreads error bursts across code symbols and can make the resulting error process appear approximately independent at the decoder~\cite{interleaver}.} In contrast, a more common and realistic scenario is one in which errors and erasures occur randomly~\cite{Capacity_LRC}. Under such stochastic behavior, errors often affect multiple symbols within the same recovery set, leaving other recovery sets completely intact. This sharply contrasts with the adversarial scenario, where errors are distributed to corrupt as many recovery sets as possible. Consequently, the number of corrupted recovery sets is often significantly smaller than the total number of raw errors, allowing successful decoding even beyond worst-case guarantees. This explains why MLD frequently corrects more than \(\lfloor (t-1)/2 \rfloor\) errors in practice and motivates the probabilistic analysis that follows.

\paragraph{Probabilistic approach.}
A more realistic view assumes that erasures or bit-flips occur independently according to a stochastic channel law, such as the memoryless $\mathrm{BEC}$ or $\mathrm{BSC}$. Under this model, we analyze the \emph{probability of decoding failure} at both the symbol (BER) and block (BLER) levels as functions of the channel parameters (bit-flip probability \(p_f\) or erasure probability \(p_e\)) and the code parameters \((n,r,t)\).

\subsection{Binary Symmetric Channel ($\mathrm{BSC}$)}
In this case, each bit is independently flipped with probability $p_f$ (``data corruption'') where $0 \le p_f < 0.5$. We now characterize the symbol/bit decoding failure probability of MLD in this setting.

\begin{theorem}
\label{thm:error_prob}
Consider transmission over a $\mathrm{BSC}(p_f)$ with $\mathrm{i.i.d.}$ errors, where $0 \le p_f < 0.5$ is a fixed bit-flip probability. For binary linear \((r, t)_a\)-LRC, 
the symbol (or bit) decoding failure probability of MLD 
satisfies
\begin{align*}
P^{\mathrm{BSC}}_{\mathrm{fail,bit}}
& \le \bigl(2\sqrt{a(1-a)}\bigr)^t = \big(1-(1-2p_f)^{2r}\big)^{t/2}, \quad\text{where}\quad
a = \frac{1 + (1 - 2p_f)^r}{2}.
\end{align*}
\end{theorem}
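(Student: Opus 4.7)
\textbf{Proof plan for Theorem~\ref{thm:error_prob}.}
The plan is to reduce the bit-failure event at a single target symbol to a deviation event for a sum of i.i.d.\ Bernoulli random variables, and then bound that deviation by the sharpest one-parameter Chernoff estimate. Fix an arbitrary symbol index $i \in [n]$ and its $t$ pairwise disjoint recovery sets $R^{1,i},\ldots,R^{t,i}$, each of size $r$. Because the code is linear and the BSC is symmetric (its error distribution is invariant under $\mathbb{F}_2$-translation by any codeword), I first argue that the bit-failure probability does not depend on which codeword was transmitted, so I may condition on the all-zero codeword and view the received vector as the error pattern $\mathbf{e} \in \mathbb{F}_2^n$ with i.i.d.\ Bernoulli$(p_f)$ entries. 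In that reduction, the $j$-th local estimate $\hat{c}_i^{(j)} = \sum_{h \in R^{j,i}} e_h$ is exactly the parity of the errors restricted to $R^{j,i}$, and is \emph{correct} (equal to $0$) iff this parity is even.

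Next I use the disjointness of the recovery sets: since the $R^{j,i}$ are pairwise disjoint, the parities $\hat{c}_i^{(1)},\ldots,\hat{c}_i^{(t)}$ are functions of disjoint blocks of i.i.d.\ entries of $\mathbf{e}$, hence mutually independent. Standard generating-function manipulation on the identity $\sum_{k\text{ even}}\binom{r}{k}p_f^k(1-p_f)^{r-k} = \tfrac{1}{2}\bigl[(1-p_f+p_f)^r + (1-p_f-p_f)^r\bigr]$ gives the per-vote correctness probability
\[
a \;=\; \Pr\!\left[\hat{c}_i^{(j)} = 0\right] \;=\; \frac{1 + (1-2p_f)^r}{2},
\]
so that each indicator $X_j := \mathbf{1}\{\hat{c}_i^{(j)} \neq 0\}$ is Bernoulli$(1-a)$, and $X := \sum_{j=1}^t X_j \sim \mathrm{Binomial}(t,1-a)$. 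By the decision rule, MLD fails at symbol $i$ only if at least half of the votes are wrong, so $P^{\mathrm{BSC}}_{\mathrm{fail,bit}} \le \Pr[X \ge t/2]$.

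I then invoke the Chernoff bound in the form $\Pr[X \ge t/2] \le \min_{s>0}\bigl(a + (1-a)s\bigr)^t s^{-t/2}$. Setting the derivative of $(a+(1-a)s)/\sqrt{s}$ in $s$ to zero gives the optimal $s^\ast = a/(1-a)$ (well-defined and positive since $p_f<1/2$ implies $a \in (1/2,1)$), which yields the Bhattacharyya-type bound $\Pr[X \ge t/2] \le \bigl(2\sqrt{a(1-a)}\bigr)^t$. The second equality in the statement is then a short algebraic simplification: $4a(1-a) = 1-(2a-1)^2 = 1-(1-2p_f)^{2r}$, so $(2\sqrt{a(1-a)})^t = (1-(1-2p_f)^{2r})^{t/2}$, as claimed.

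The only delicate point, and the one I would check with some care, is the symmetry reduction together with the convention for ties in the majority rule: the bound $\Pr[X \ge t/2]$ is loosest when $t$ is even, where a tie in the vote count is broken in favor of output $1$, so the worst case is $c_i=0$, and this is precisely the case covered by the displayed estimate. Everything else (independence from disjointness, the parity identity for $a$, and the optimality of $s^\ast$) is routine once the setup is in place, and no machinery beyond the classical Chernoff argument for Bernoulli sums is needed.
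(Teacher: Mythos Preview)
Your proposal is correct and follows essentially the same route as the paper: compute the per-vote correctness probability $a$ via the even/odd parity identity, use disjointness of the recovery sets to obtain independence, and then apply a Chernoff bound optimized at the Bhattacharyya point to get $(2\sqrt{a(1-a)})^t$. The only cosmetic differences are that the paper works with $\pm 1$-valued variables $\xi_x$ and the event $\{\sum \xi_x \le 0\}$ rather than your Binomial tail $\{X \ge t/2\}$, and that you are more explicit about the all-zero-codeword reduction and the tie-breaking convention, which the paper leaves implicit.
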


\begin{proof}
For a symbol $x$ and any of its associated recovery sets $R^x$ of size $r$, let
\begin{align*}
W & \triangleq \Pr[R^x\text{ gives a wrong vote}]\\
& = \Pr[R^x\text{ contains an odd number of bit flips}] \\
& = \sum_{\text{odd } s}\binom{r}{s}p_f^s(1-p_f)^{r-s},
\end{align*}
and
\begin{align*}
C & \triangleq \Pr[R^x\text{ gives a correct vote}] \\
& = \Pr[R^x\text{ contains an even number of bit flips}] \\
& = \sum_{\text{even } s}\binom{r}{s}p_f^s(1-p_f)^{r-s}.
\end{align*}
Clearly, $W+C=1$, and by the Binomial theorem expansion,
\begin{align*}
C-W & = \sum_{\text{even }s}\binom{r}{s}p_f^s(1-p_f)^{r-s} - \sum_{\text{odd } s}\binom{r} {s}p_f^s(1-p_f)^{r-s} \\
& = ((1-p_f)-p_f)^r = (1-2p_f)^r.  
\end{align*}
Hence,
\begin{align}
\label{eq:recovery_set_probability}
C = \frac{1 + (1 - 2p_f)^r}{2}, 
\qquad
W = \frac{1 - (1 - 2p_f)^r}{2}.
\end{align}
Note that since $0 \le p_f < 0.5$, we have $W < \frac{1}{2} < C$.

\emph{Random-variable formulation.}
For each recovery set $R^x$, define
\[
\xi_x =
\begin{cases}
+1, & \text{if $R^x$ gives a correct vote} \\
-1, & \text{if $R^x$ gives a wrong vote.}\\
\end{cases}
\]
Then $\Pr[\xi_x = 1] = C$ and $\Pr[\xi_x = -1] = W$. 
MLD fails when the total number of wrong votes exceeds the correct ones, i.e.,
\[
\sum_{x=1}^{t} \xi_x \le 0.
\]
Note that different $\xi_x$ are obtained from pairwise disjoint recovery sets $R^x$ and therefore \emph{independent}. Thus, the decoding error probability is
\(
P^{\mathrm{BSC}}_{\mathrm{fail,bit}}
= \Pr\!\left(\sum_{x=1}^{t} \xi_x \le 0\right).
\)
The result follows immediately using Lemma~\ref{lem:Chernoff}.
\end{proof}

\begin{lemma}[Chernoff bound]
\label{lem:Chernoff}
Let $X_1,\dots,X_t$ be $\mathrm{i.i.d.}$ random variables taking values in $\{-1, 1\}$, with $\Pr(X_i = 1) = a$ and $\Pr(X_i = -1) = 1-a$. If $a > 1/2$, then for any $t \ge 1$, the cumulative sum $S_t = \sum_{i=1}^t X_i$ satisfies
\[
\Pr(S_t \le 0) \le 
{\left(2\sqrt{a(1-a)}\right)^t}.
\]
\end{lemma}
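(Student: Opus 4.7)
The plan is to apply the standard Chernoff-style argument based on the moment generating function (MGF) of a single $X_i$, followed by optimization over the tilt parameter. First, I would observe that $\{S_t \le 0\}$ is equivalent to $\{e^{-\lambda S_t} \ge 1\}$ for any $\lambda > 0$, so Markov's inequality together with the i.i.d.\ assumption yields
\[
\Pr(S_t \le 0) \le \mathbb{E}\!\left[e^{-\lambda S_t}\right] \;=\; \bigl(\mathbb{E}[e^{-\lambda X_1}]\bigr)^{t}
\;=\; \bigl(a\,e^{-\lambda} + (1-a)\,e^{\lambda}\bigr)^{t}.
\]
This step reduces the tail bound problem to a one-dimensional optimization over $\lambda > 0$.

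Next, I would minimize $M(\lambda) := a e^{-\lambda} + (1-a) e^{\lambda}$ over $\lambda > 0$. Setting $M'(\lambda) = -a e^{-\lambda} + (1-a) e^{\lambda} = 0$ gives the optimal tilt $e^{2\lambda^*} = a/(1-a)$, equivalently $\lambda^* = \tfrac{1}{2}\ln\!\bigl(a/(1-a)\bigr)$. The hypothesis $a > 1/2$ is exactly what guarantees $\lambda^* > 0$, so this critical point is admissible; convexity of $M$ confirms it is the unique minimizer.

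Substituting $\lambda^*$ back yields
\[
M(\lambda^*) = a\sqrt{\tfrac{1-a}{a}} + (1-a)\sqrt{\tfrac{a}{1-a}} \;=\; 2\sqrt{a(1-a)},
\]
so raising to the $t$-th power delivers the claimed bound $\Pr(S_t \le 0) \le \bigl(2\sqrt{a(1-a)}\bigr)^{t}$. Since $a>1/2$, the AM--GM inequality shows $2\sqrt{a(1-a)} < 1$, so the bound is nontrivial and decays geometrically in $t$.

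I do not anticipate a genuine obstacle here: the argument is a textbook Chernoff bound for a two-point distribution and the MGF admits a closed-form minimizer. The only subtlety worth flagging explicitly is the sign of $\lambda^*$, since a Chernoff bound on a lower tail requires $\lambda > 0$; this is precisely where the assumption $a > 1/2$ is used. If desired, the same bound can be re-derived by writing $S_t = t - 2N_t$, where $N_t$ is the number of $-1$ outcomes, and applying the standard Chernoff inequality to the Binomial$(t, 1-a)$ tail $\Pr(N_t \ge t/2)$, which gives an independent sanity check on the constant $2\sqrt{a(1-a)}$.
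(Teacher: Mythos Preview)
Your proposal is correct and follows essentially the same approach as the paper: Markov's inequality on the exponentiated sum, factorization via independence, and optimization of the tilt parameter to obtain the base $2\sqrt{a(1-a)}$. The only cosmetic difference is a sign convention---you use $e^{-\lambda S_t}$ with $\lambda>0$, whereas the paper uses $e^{\lambda S_t}$ with $\lambda<0$---which is of course equivalent.
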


\begin{proof}
The result follows from the standard \emph{exponential moment method} (see, e.g.,~\cite[Ch.~2]{concentration}). For any $\lambda < 0$, Markov's inequality applied to the random variable $e^{\lambda S_t}$ implies
\[
\Pr(S_t \le 0) = \Pr(e^{\lambda S_t} \ge 1) \le \mathbb{E}[e^{\lambda S_t}].
\]
By independence, the moment generating function factorizes as $\mathbb{E}[e^{\lambda S_t}] = \prod_{i=1}^t \mathbb{E}[e^{\lambda X_i}] = (a e^\lambda + (1-a)e^{-\lambda})^t$. We minimize the base term $\phi(\lambda) = a e^\lambda + (1-a)e^{-\lambda}$ with respect to $\lambda$. Setting the derivative to zero yields the optimal choice $e^{\lambda} = \sqrt{(1-a)/a}$. Since $a > 1/2$, we have $e^\lambda < 1$, ensuring $\lambda < 0$ as required. Substituting this value back into $\phi(\lambda)$ yields the bound $\big(2\sqrt{a(1-a)}\big)^t$.
\end{proof}

\begin{remark}[Exponential Tightness]
\label{rm:tightness}
The bound is exponentially tight; that is, the base of the exponential term cannot be improved. This follows from Cramér's theorem in the theory of Large Deviations (see~\cite[Ch.~2]{dembo2009large}). The rate of decay is governed by the Kullback–Leibler divergence between a Bernoulli$(1/2)$ and a Bernoulli$(a)$ distribution:
\begin{align*}
& \lim_{t \to \infty} -\frac{1}{t} \ln \Pr(S_t \le 0) = D_{\mathrm{KL}}(1/2 \,\|\, a) \\
& = \frac{1}{2} \ln\left(\frac{1/2}{a}\right) + \left(1-\frac{1}{2}\right) \ln\left(\frac{1-1/2}{1-a}\right) \\
& = -\ln\left(2\sqrt{a(1-a)}\right).
\end{align*}
Consequently, $\Pr(S_t \le 0) \approx \mathrm{e}^{ -t \cdot D_{\mathrm{KL}}(1/2 \,\|\, a)} = \left(2\sqrt{a(1-a)}\right)^t$. This confirms that both the base and the linear dependence on $t$ are asymptotically optimal. 
\end{remark}



\color{black}{Theorem~\ref{thm:error_prob} makes no assumption on how the availability \(t\) scales with the blocklength \(n\). In particular, the bound holds uniformly for all \(n\) and \(t\), whether \(t\) is fixed, grows slowly, or grows unbounded with \(n\), and it implies an exponential decay (in \(t\)) of the \emph{upper bound} on the bit decoding failure probability. On the other hand, it follows from~\eqref{eq:recovery_set_probability} that the probability \(C\) that a given recovery set produces a correct vote is a decreasing function of the locality \(r\) for any fixed bit-flip probability \(p_f \in [0,0.5)\); equivalently, smaller recovery sets are more likely to yield a correct vote.

Taken together, these observations reveal a basic \emph{contrast}. From the perspective of a target symbol, having more recovery sets (larger \(t\)) provides more independent votes and improves reliability. From the perspective of each recovery equation (parity constraint), smaller locality (smaller \(r\)) improves the reliability of the individual vote. Interpreting the recovery structure through a bipartite Tanner-like graph, these preferences resemble the tension between variable-node connectivity and check-node degree observed in iterative decoding of LDPC codes (e.g., under belief propagation)~\cite{irregular_LDPC}.}

Theorem~\ref{thm:error_prob} leads to the following corollary on the block decoding failure probability (BLER).
\begin{theorem}[\textbf{MLD Success for LRCs on the BSC}]
\label{thm:block_flip_prob}
Consider a binary \((r,t)_a\)-LRC where $r$ is fixed and $t = t(n)$ grows with $n$. When transmitted over a $\mathrm{BSC}(p_f)$ with fixed error probability $0 \le p_f < 0.5$, the code achieves successful block decoding asymptotically if the availability scales sufficiently fast. Concretely, if $t(n) = \omega(\log n)$ (i.e., $t(n)$ grows faster with $n$ than $\log n$ or $\lim\limits_{n\to\infty} \frac{t(n)}{\log n} = \infty$), then the probability of block decoding failure vanishes in the asymptotic regime of $n$:
\[
    \lim\limits_{n\rightarrow \infty} P_{\mathrm{fail,block}}^{\mathrm{BSC}} = 0.
\]
\end{theorem}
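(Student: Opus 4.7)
The plan is to combine Theorem~\ref{thm:error_prob} with a union bound over the $n$ code symbols and then invoke the growth assumption on $t(n)$ to drive the resulting bound to zero. Block decoding succeeds if and only if every symbol is correctly decoded by MLD, so the block failure event is contained in the union of the $n$ per-symbol failure events. Since Theorem~\ref{thm:error_prob} gives the \emph{same} upper bound on $P^{\mathrm{BSC}}_{\mathrm{fail,bit}}$ for every symbol (the argument only uses locality $r$, availability $t$, and the $\mathrm{i.i.d.}$ bit-flip property, none of which depend on the symbol index), the union bound yields
\[
P_{\mathrm{fail,block}}^{\mathrm{BSC}} \;\le\; \sum_{i=1}^{n} P^{\mathrm{BSC}}_{\mathrm{fail,bit}} \;\le\; n \cdot \bigl(2\sqrt{a(1-a)}\bigr)^{t(n)}, \qquad a = \frac{1+(1-2p_f)^r}{2}.
\]

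Next, I would set $q := 2\sqrt{a(1-a)}$ and verify that $q \in (0,1)$ \emph{strictly}. Because $p_f \in [0,1/2)$ and $r$ is a fixed positive integer, $(1-2p_f)^r$ is a positive constant, so $a > 1/2$; the function $a \mapsto 2\sqrt{a(1-a)}$ attains its maximum value $1$ only at $a=1/2$, hence $q < 1$. Taking base-$2$ logarithms,
\[
\log\!\bigl(n \cdot q^{t(n)}\bigr) \;=\; \log n \;-\; t(n) \cdot \log(1/q),
\]
and since $\log(1/q)$ is a strictly positive constant depending only on $p_f$ and $r$, the hypothesis $t(n) = \omega(\log n)$ forces $t(n)\log(1/q) - \log n \to +\infty$. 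Consequently $n \cdot q^{t(n)} \to 0$, and therefore $P_{\mathrm{fail,block}}^{\mathrm{BSC}} \to 0$, as claimed.

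There is no real obstacle in the argument: it is a two-line union bound followed by a routine asymptotic comparison. The only subtlety worth flagging is the strictness $q<1$, which uses both $p_f<1/2$ and the fact that $r$ is a fixed finite integer so that $(1-2p_f)^r$ remains bounded away from zero; if $r$ were allowed to scale with $n$, the base $q$ could approach $1$ and the threshold on $t(n)$ would need to be strengthened accordingly. It is also worth remarking that the sufficient condition $t(n)=\omega(\log n)$ is essentially sharp in the exponential-tightness sense of Remark~\ref{rm:tightness}: the per-symbol bound $q^t$ cannot be improved at the exponent level, so any growth rate $t(n) = O(\log n)$ with a sufficiently small implied constant would leave $n q^{t(n)}$ bounded away from zero.
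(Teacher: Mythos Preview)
Your proof is correct and takes essentially the same approach as the paper: apply the per-symbol bound of Theorem~\ref{thm:error_prob}, union-bound over the $n$ symbols, and use $t(n)=\omega(\log n)$ together with the base being a constant strictly below $1$ to force the product $n\cdot q^{t(n)}$ to zero. The only cosmetic difference is that the paper rewrites the base as $(1-x)^{1/2}$ with $x=(1-2p_f)^{2r}$ and invokes $\log(1-x)<-x$, whereas you work directly with $q=2\sqrt{a(1-a)}$ and the constant $\log(1/q)$; these are equivalent since $q^2=1-(1-2p_f)^{2r}$.
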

\begin{proof}
For any fixed symbol, Theorem~\ref{thm:error_prob} gives the bit-failure probability
\[
P_{\mathrm{fail,bit}}^{\mathrm{BSC}}
    \le \bigl(1 - (1-2p_f)^{2r}\bigr)^{t/2}.
\]
Using the union bound, the failure probability of decoding blocks (or words) satisfies
\[
P_{\mathrm{fail,block}}^\mathrm{BSC}
    \;\le\; n \, P_{\mathrm{fail,bit}}^\mathrm{BSC}
    \le n \bigl(1 - (1-2p_f)^{2r}\bigr)^{t/2} .
\]
Define
\[
x \triangleq (1-2p_f)^{2r} \in (0,1],
\]
so that the upper bound becomes \(n(1-x)^{t/2}\).  
Using the standard inequality \(\log(1-x) < -x\) for all \(x \in (0,1)\), we may write
\[
\log(1-x) = -x - \varepsilon,
\qquad \text{for some } \varepsilon > 0.
\]
Thus
\[
n(1-x)^{t/2}
    = \exp\!\left(\log n - \frac{t}{2}x - \frac{t}{2}\varepsilon\right).
\]

Since \(x = (1-2p_f)^{2r}\) and $\varepsilon$ are both positive, the exponent tends to \(-\infty\) provided
\[
\lim_{n\to \infty}\frac{t(n)}{\log n} \;\longrightarrow\; \infty.
\]
In this case we have \(n(1-x)^{t/2} \to 0\), and therefore  
\(P_{\mathrm{fail,block}}^{\mathrm{BSC}} \to 0\). This completes the proof.
\end{proof}

The result above leads to the following interesting result.

\begin{theorem}[\textbf{MLD of LRCs with availability $t$ and locality $r$ over the $\mathrm{BSC}$}]
\label{thm:error_weight}
Let \(c \ge 3\) be any constant. For a binary \((r,t)_a\)-LRC with availability \(t = t(n)\) growing with $n$, all but a vanishing fraction of error patterns of weight
\begin{equation}
\label{eq:error_weight}
w \;\le\; \frac{n}{2}\!\left(1 - \sqrt[2r]{\frac{c \log n}{t(n)}}\right)    
\end{equation}
are correctly recovered by majority-logic decoding as \(n \to \infty\). In other words, virtually all error patterns of weight $w$ are correctable in the asymptotic regime of $n$. \textcolor{black}{In other words, long RM codes of growing availability can be decoded with complexity $O(nt(n))$ and decoding threshold $\delta = \dfrac{n}{2},\, \epsilon = \sqrt[2r]{\frac{c \log n}{t(n)}}$.}
\end{theorem}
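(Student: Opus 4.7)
The plan is to translate the deterministic fixed-weight statement into the probabilistic BSC setting of Theorem~\ref{thm:block_flip_prob} by choosing the channel parameter to match $w$. Fix an integer $w$ satisfying the hypothesis and introduce a hypothetical channel $\mathrm{BSC}(p_f)$ with $p_f \triangleq w/n$. The assumption $w \le \frac{n}{2}\bigl(1 - \sqrt[2r]{c\log n/t}\bigr)$ rearranges exactly into $(1-2p_f)^{2r} \ge c\log n/t$, so the pre-asymptotic bound derived inside the proof of Theorem~\ref{thm:block_flip_prob} yields
\[
P_{\mathrm{fail,block}}^{\mathrm{BSC}(p_f)} \;\le\; n\bigl(1 - (1-2p_f)^{2r}\bigr)^{t/2} \;\le\; n\bigl(1 - c\log n/t\bigr)^{t/2} \;\le\; n\,\mathrm{e}^{-c\log n/2} \;=\; n^{\,1 - c/(2\ln 2)}.
\]

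Next I would decompose the BSC block-failure probability according to the Hamming weight of the realized error pattern. Let $\mathcal{F}_w \subseteq \mathbb{F}_2^n$ denote the set of weight-$w$ error patterns on which MLD fails. Since under $\mathrm{BSC}(p_f)$ every weight-$w$ pattern occurs with the common probability $p_f^w(1-p_f)^{n-w}$,
\[
P_{\mathrm{fail,block}}^{\mathrm{BSC}(p_f)} \;\ge\; |\mathcal{F}_w|\,p_f^w(1-p_f)^{n-w} \;=\; \Pr\bigl[\mathrm{Bin}(n,p_f) = w\bigr]\cdot\frac{|\mathcal{F}_w|}{\binom{n}{w}},
\]
so the target fraction $|\mathcal{F}_w|/\binom{n}{w}$ is bounded by the ratio of the two probabilities. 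By construction $w = np_f$ is exactly the mean (and mode) of $\mathrm{Bin}(n,p_f)$, and Stirling's formula (or the local central limit theorem) gives the lower bound $\Pr[\mathrm{Bin}(n,p_f) = w] = \Omega(1/\sqrt{n})$. Combining the two estimates yields $|\mathcal{F}_w|/\binom{n}{w} = O\bigl(n^{\,3/2 - c/(2\ln 2)}\bigr)$, which vanishes as $n \to \infty$ since $c \ge 3 > 3\ln 2 \approx 2.08$.

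The only step requiring care is the $\Omega(1/\sqrt{n})$ lower bound on the mode probability, because $p_f = p_f(n)$ itself varies with $n$. When $w = \Theta(n)$, the assumption $t = \omega(\log n)$ forces $p_f \to 1/2$ and $np_f(1-p_f) = \Theta(n)$, so the standard Stirling estimate for the central binomial term applies uniformly; when $w = o(n)$, the variance $np_f(1-p_f) = \Theta(w)$ in fact gives the stronger bound $\Omega(1/\sqrt{w})$. In the worst case one can even fall back on the trivial bound $\Pr[\mathrm{Bin}(n,p_f) = \mathrm{mode}] \ge 1/(n+1)$, which still suffices for $c \ge 3$ and gives $|\mathcal{F}_w|/\binom{n}{w} = O\bigl(n^{\,2 - c/(2\ln 2)}\bigr) = o(1)$. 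Everything else is a direct rearrangement of the exponential bound established in the proof of Theorem~\ref{thm:block_flip_prob}.
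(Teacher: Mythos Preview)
Your argument is correct and follows essentially the same route as the paper: choose the BSC parameter so that the expected error weight equals $w$, invoke the union-bound/Theorem~\ref{thm:error_prob} estimate $P_{\mathrm{fail,block}}\le n\bigl(1-(1-2p_f)^{2r}\bigr)^{t/2}$, and then divide by the $\Omega(n^{-1/2})$ mass of the fixed weight class to control $|\mathcal{F}_w|/\binom{n}{w}$. The only cosmetic difference is that you apply $(1-y)^m\le\mathrm{e}^{-ym}$ directly, obtaining the clean bound $n^{3/2-c/(2\ln 2)}$, whereas the paper keeps the residual $-t\varepsilon/2$ from $\log(1-x)=-x-\varepsilon$; your version actually makes the role of the constraint $c\ge 3$ more transparent, since $3/(2\ln 2)\approx 2.16>3/2$ already forces the exponent negative without appealing to $t\to\infty$.
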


\begin{proof}
Consider a $\mathrm{BSC}(p)$ channel in which
\begin{align}
p \triangleq \frac{1}{2}\!\left(1 - \sqrt[2r]{\frac{c\log n}{t}}\right),
\qquad
x \triangleq (1-2p)^{2r} = \frac{c\log n}{t}, 
\label{eq:setting_p}
\end{align}
so that error patterns of weight \(w = pn\) correspond to the threshold in the right-hand side of~\eqref{eq:error_weight}. Using Union bound and Theorem~\ref{thm:error_prob}, the block decoding failure probability under majority-logic decoding satisfies
\[
P_{\mathrm{fail,block}}^{\mathrm{BSC}} \le nP_{\mathrm{fail,bit}}^{\mathrm{BSC}}
    \;\le\; n (1-x)^{t/2}.
\]
Using \(\log(1-x) = -x - \varepsilon\) for some \(\varepsilon>0\) and \(x\in(0,1)\),
\begin{align}
\label{eq:fail_block}
P_{\mathrm{fail,block}}^{\mathrm{BSC}}
   & \le \exp\!\left(\log n - \frac{t}{2}x - \frac{t}{2}\varepsilon\right) \notag \\
   & = \exp\!\left(\left(1-\frac{c}{2}\right)\log n - \frac{t}{2}\varepsilon\right) \ 
    (\text{since } tx = c\log n.)
\end{align}
Now consider the set of weight-\(pn\) erasure patterns,
\[
A_m=\{\mathbf e\in\{0,1\}^n : \mathrm{wt}(\mathbf e)=pn\}.
\]
By the classical lower bound on binomial probabilities  
\cite[Ch.~10, Lem.~7]{Coding:books/MacWilliamsS77},
\begin{align}
\label{eq:A_m_lower_bound}
P(A_m)
   & = \binom{n}{pn} p^{pn} (1-p)^{(1-p)n}
    \ge \frac{1}{\sqrt{8np(1-p)}} \notag\\
    & \ge \frac{1}{\sqrt{2n}} \hspace{0.25in} \text{($4p(1-p) \le 1$ by AM-GM inequality)} \notag\\
    & = \Theta(n^{-1/2}) = \exp\!\left(-\frac12 \log n + o(1)\right).
\end{align}
Let \(B_m \subseteq A_m\) be the subset of weight-\(pn\) patterns on which MLD \emph{fails}.  
Because all weight-\(pn\) erasure patterns have identical probability under the $\mathrm{BSC}$ (i.e., they have the same \emph{type}~\cite[Ch.~2]{Csiszár_Körner_2011}),
\begin{align*}
P(A_m)& =|A_m| p^{pn} (1-p)^{(1-p)n}, \\
P(B_m) & =|B_m| p^{pn} (1-p)^{(1-p)n}.
\end{align*}
Thus the fraction of uncorrectable weight-\(pn\) patterns is
\[
\frac{|B_m|}{|A_m|}
= \frac{P(B_m)}{P(A_m)}.
\]
Since any block failure must occur on some error pattern,
\[
P(B_m)\le P_{\mathrm{fail,block}},
\]
and therefore
\[
\frac{|B_m|}{|A_m|} = \frac{P(B_m)}{P(A_m)}
    \le \frac{P_{\mathrm{fail,block}}}{P(A_m)}.
\]

Substituting the upper bound for $P_{\mathrm{fail,block}}^{\mathrm{BSC}}$ in~\eqref{eq:fail_block} and the lower bound for $P(A_m)$ in~\eqref{eq:A_m_lower_bound}, one has
\begin{align}
\label{eq:exponent_bound}
\frac{|B_m|}{|A_m|}
\le
\exp\!\left(\frac32\log n - \frac{c}{2}\log n - \frac{t}{2}\varepsilon - o(1)\right).
\end{align}
If \(c \ge 3\) and \(t(n)\to\infty\), the exponent tends to \(-\infty\), and hence
\[
\frac{|B_m|}{|A_m|} \longrightarrow 0.
\]

Thus, asymptotically, only a vanishing fraction of weight-\(pn\) error patterns are uncorrectable, proving that virtually all such patterns are correctly decoded.
\end{proof}
\textcolor{black}{Note that although the proof of Theorem~\ref{thm:error_weight} is presented for error patterns of weight $\frac{n}{2}\!\left(1 - \sqrt[2r]{\frac{c \log n}{t(n)}}\right)$, it is obvious to see that almost all error pattern of weight \emph{less} than this values is also correctable, simply by adjusting the value of $p$ in~\eqref{eq:setting_p}.}

\textcolor{black}{Theorem~\ref{thm:error_weight} can be recast as the following.
\begin{corollary}
Long RM codes can be decoded with vanishing output (block) error probability and complexity $O(nt(n))$ on a BSC with bit flip (or crossover) probability $(1-\epsilon)/2$.
\end{corollary}}
One notes that the constant \(c\) in~\eqref{eq:error_weight} and~\eqref{eq:exponent_bound} captures a trade-off between the \emph{weight} of the correctable error patterns and the \emph{rate} at which the fraction of uncorrectable patterns decays.  
Choosing a smaller value of \(c\) increases the threshold weight \(w\) in~\eqref{eq:error_weight}, allowing MLD to correct error patterns of higher weight.  
However, this comes at the cost of a slower decay in the upper bound~\eqref{eq:exponent_bound} on the fraction of uncorrectable patterns.  
Conversely, larger values of \(c\) yield faster decay of this fraction but guarantee correctability only for error patterns of smaller weight. 

On the other hand, having larger values of $t$ helps to yield faster decay on the fraction of uncorrectable patterns, and also guarantee correctability for error patterns of higher weight.

\begin{remark}
Unlike Theorem~\ref{thm:block_flip_prob}, Theorem~\ref{thm:error_weight} does not impose any condition on the growth rate of \(t(n)\). However, if \(t(n) = \omega(\log n)\) (as in the setting of Theorem~\ref{thm:block_flip_prob}), then the weight threshold
\[
    w = \frac{n}{2}\left(1 - \sqrt[2r]{\frac{c \log n}{t(n)}}\right)
\]
satisfies \(\lim_{n\to\infty} \frac{w}{n} = \frac{1}{2}\). In this regime, MLD can asymptotically correct \emph{virtually all} error patterns of weight up to half the entire blocklength \(n\). In particular, the typical error patterns corrected by MLD may have weight far exceeding \(\frac{d_{\min}}{2}\) (i.e., half the minimum distance of the underlying code)—a phenomenon that reflects the large gap between worst-case (deterministic) and typical-case (probabilistic) error correction capability.

In contrast, Theorem~\ref{thm:block_flip_prob} does not explicitly define a weight threshold for correctable error patterns. Instead, relying on the assumption of a bit flip probability \(p_f \in [0, 0.5)\), the Law of Large Numbers implies that the weight of the error patterns in a $\mathrm{BSC}(p_f)$ channel will concentrate around \(n p_f < n/2\), which is consistent with the asymptotic bound derived in Theorem~\ref{thm:error_weight}.
\end{remark}


\begin{exmp}[Numerical example]
For $n=2^{10}$, $t(n)=\log ^2 n$, $r=4$, and $c=2$:
\begin{align*}
\log  n =\log  1024 & = 10,\quad t = 100,\\
\left(\frac{c\log  n}{t}\right)^{\!1/8} & =\left(\frac{2}{10}\right)^{\!1/8}\approx 0.818.
\end{align*}
Thus
\[
w\approx 2^{10}\,(1-0.818)\approx 186.
\]
So with $t=\log ^2 n$, MLD succeeds for \emph{virtually all} error patterns up to $w\approx 186$.  
By contrast, the \textbf{worst-case deterministic availability guarantee} only ensures recovery from up to $\lfloor\frac{t-1}{2}\rfloor = 49$ errors (e.g., if each hits a different recovery set). There is a huge difference in decoding performance between the typical-case and the worst-case scenarios.
\end{exmp}

\subsection{Binary Erasure Channel ($\mathrm{BEC}$)}
In this case, the communication channel is a $\mathrm{BEC}$ where each bit is independently erased with probability $p_e$ (``data loss'') where $0 \le p_e < 1$.

For a given recovery set $R^{x, k}$ of size $r$:
\[
\Pr[\text{no erasure in }R^{x, k}] = (1-p_e)^r.
\]
Since the $t$ recovery sets are pairwise disjoint (hence \emph{independent}), the probability that
\emph{all} sets contain at least one erasure is
\[
\Pr[\text{all }R^{x, k} \text{ have $\ge1$ erasure}] = \big(1 - (1-p_e)^r\big)^t.
\]
This leads us to the following result.
\begin{proposition}\label{prop:erasure_prob_bec}
The probability of \emph{symbol/bit decoding failure} (all sets contains at least one erased symbol) over a $\mathrm{BEC}$ where bits are erased independently with probability $p_e$ is
\[
P_{\mathrm{fail,bit}}^{\mathrm{BEC}}
= \big(1 - (1-p_e)^r\big)^t.
\]
\end{proposition}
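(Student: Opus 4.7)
The plan is to observe that, under MLD over the BEC, the decoder succeeds at recovering $c_i$ as soon as a single recovery set is entirely free of erasures: such a set delivers the exact value of $c_i$ via the $\mathbb{F}_2$-sum of its members, and this value is definitive. The crucial distinction from the BSC case is that on the BEC an incompletely received recovery set produces \emph{no} vote rather than a potentially \emph{wrong} vote, so there is no majority to be outvoted. Consequently, symbol failure is equivalent to the event that every one of the $t$ recovery sets for $c_i$ contains at least one erased position, and the proof reduces to computing the probability of this intersection.

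First I would compute, for a single recovery set $R^{x,k}$ of size $r$, the probability that it is entirely erasure-free. Since erasures occur independently with probability $p_e$ across the $n$ coordinates, this probability is $(1-p_e)^r$, and hence the complementary probability that $R^{x,k}$ contains at least one erasure is $1-(1-p_e)^r$, as already recorded in the paragraph preceding the proposition.

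Next I would invoke independence across the $t$ recovery sets. The key structural fact is that $R^{1,k},\dots,R^{t,k}$ are pairwise disjoint by Definition~\ref{defin:LRC}, so the events ``$R^{x,k}$ contains at least one erasure'' depend on disjoint collections of the i.i.d.\ erasure indicators and are therefore mutually independent. Multiplying their individual probabilities then gives
$$P_{\mathrm{fail,bit}}^{\mathrm{BEC}} \;=\; \prod_{x=1}^{t} \bigl(1-(1-p_e)^r\bigr) \;=\; \bigl(1-(1-p_e)^r\bigr)^{t},$$
which matches the stated expression.

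There is no substantive obstacle here: the only point that deserves explicit mention is that on the BEC the failure event reduces cleanly to ``every recovery set contains at least one erasure,'' so no Chernoff-type tail bound (as in Theorem~\ref{thm:error_prob}) is needed and the answer is an exact equality rather than an upper bound. Once the structural observation and pairwise disjointness are in place, the product formula is immediate from the i.i.d.\ assumption on the channel.
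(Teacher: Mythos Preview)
Your proposal is correct and follows essentially the same argument as the paper: compute the probability that a single size-$r$ recovery set is erasure-free, take the complement, and use pairwise disjointness of the recovery sets (hence independence of the erasure events) to multiply across the $t$ sets. The paper's proof is in fact just the two displayed computations in the paragraph immediately preceding the proposition, and your write-up reproduces them with a bit more commentary on why the BEC case yields an exact equality.
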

Since $0 \le 1 - (1-p_e)^r < 1$, the probability of decoding failure decays as $t$ grows. This observation leads to the following result.

\begin{theorem}[\textbf{MLD for LRCs with availability $t$ and locality $r$ over the $\mathrm{BEC}$}]
\label{thm:erasure_prob}
Consider a binary \((r,t)_a\)-LRC where $r$ is fixed and $t = t(n)$ grows with $n$. Suppose the codeword is transmitted over a $\mathrm{BEC}(p_e)$ with $\mathrm{i.i.d.}$ erasures, where $0 \le p_e < 1$ is fixed.  
If $t(n)$ grows faster than $\log n$ as $n$ tends to infinity, then the probability of block decoding failure vanishes in the asymptotic regime of $n$:
\[
    \lim\limits_{n\rightarrow \infty} P_{\mathrm{fail,block}}^{\mathrm{BEC}} = 0.
\]
\end{theorem}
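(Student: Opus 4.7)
The plan is to mirror the argument used in Theorem~\ref{thm:block_flip_prob} for the BSC case, with Proposition~\ref{prop:erasure_prob_bec} now supplying the per-symbol failure probability. First, I would apply the union bound over the $n$ codeword coordinates to obtain
\[
P_{\mathrm{fail,block}}^{\mathrm{BEC}} \;\le\; n\, P_{\mathrm{fail,bit}}^{\mathrm{BEC}} \;=\; n\bigl(1-(1-p_e)^r\bigr)^{t(n)}.
\]

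Next, since $p_e \in [0,1)$ and $r$ is fixed, the quantity $q \triangleq 1-(1-p_e)^r$ lies in $[0,1)$ and is independent of $n$. The degenerate case $p_e = 0$ makes the bound identically zero, so I may assume $q \in (0,1)$ throughout. Setting $\alpha \triangleq \log(1/q) > 0$, the bound rewrites as
\[
n\, q^{t(n)} \;=\; \exp\!\bigl(\log n - \alpha\, t(n)\bigr),
\]
and the hypothesis $t(n) = \omega(\log n)$ immediately forces the exponent to $-\infty$, yielding $P_{\mathrm{fail,block}}^{\mathrm{BEC}} \to 0$ as desired.

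The BEC case is actually \emph{simpler} than its BSC counterpart: in Theorem~\ref{thm:block_flip_prob} the base $\bigl(1-(1-2p_f)^{2r}\bigr)^{1/2}$ only implicitly belongs to $(0,1)$ and the strict contraction had to be extracted via the Chernoff argument of Theorem~\ref{thm:error_prob} together with the assumption $p_f < 1/2$. Here, the base $q$ is manifestly a fixed constant in $[0,1)$ by direct inspection, so no large-deviations machinery is required. Consequently there is no genuine technical obstacle; the only bookkeeping item is isolating the trivial $p_e = 0$ case, after which the scaling threshold $t(n) = \omega(\log n)$ falls out of a single exponential comparison. One could also state the bound more explicitly as $P_{\mathrm{fail,block}}^{\mathrm{BEC}} \le \exp(\log n - \alpha\, t(n))$, which makes transparent that any availability growth strictly faster than $\log n / \alpha$ suffices, and which parallels the quantitative form of Theorem~\ref{thm:block_flip_prob}.
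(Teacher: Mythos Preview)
Your proposal is correct and follows essentially the same approach as the paper: union bound over the $n$ coordinates combined with Proposition~\ref{prop:erasure_prob_bec}, then an exponential comparison showing that $t(n)=\omega(\log n)$ drives the bound to zero. The only cosmetic difference is that the paper writes $\log(1-x)=-x-\varepsilon$ with $x=(1-p_e)^r$ and carries the two terms separately, whereas you work directly with $\alpha=\log(1/q)$; your version is slightly more streamlined but the argument is identical in substance.
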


\begin{proof}
The proof is similar to that of Theorem~\ref{thm:block_flip_prob}. We present it here for completeness. 

For any fixed symbol, Proposition~\ref{prop:erasure_prob_bec} gives the erasure decoding failure probability
\[
P_{\mathrm{fail,bit}}^{\mathrm{BEC}}
    = \big(1 - (1-p_e)^r\big)^t.
\]
Using the union bound, the failure probability of decoding blocks (or words) satisfies
\[
P_{\mathrm{fail,block}}^\mathrm{BEC}
    \;\le\; n \, P_{\mathrm{fail,bit}}^\mathrm{BEC}
    = n\big(1 - (1-p_e)^r\big)^t.
\]
Define
\[
x \triangleq (1-p_e)^{r} \in (0,1],
\]
so that the upper bound becomes \(n(1-x)^{t}\).  
Using the standard inequality \(\log(1-x) < -x\) for all \(x \in (0,1)\), we may write
\[
\log(1-x) = -x - \varepsilon,
\qquad \text{for some } \varepsilon > 0.
\]
Thus
\[
n(1-x)^{t}
    = \exp\!\left(\log n - tx - t\varepsilon\right).
\]

Since \(x = (1-p_e)^{r}\) and $\varepsilon$ are both positive, the exponent tends to \(-\infty\) provided
\[
\lim_{n\to \infty}\frac{t(n)}{\log n} \;\longrightarrow\; \infty.
\]
In this case we have \(n(1-x)^{t} \to 0\), and therefore  
\(P_{\mathrm{fail,block}}^{\mathrm{BEC}} \to 0\). This completes the proof.
\end{proof}



Following the same spirit of Theorem~\ref{thm:error_weight}, one obtains the following result.
\begin{theorem}
\label{thm:erasure_weight}
Let \(c \ge \dfrac{3}{2}\) be any constant.  
For a binary \((r,t)_a\)-LRC with availability \(t = t(n)\) growing with $n$, all but a vanishing fraction of erasure patterns of weight
\begin{equation}
\label{eq:erasure_weight}
w \;\le\; n\!\left(1 - \sqrt[r]{\frac{c \log n}{t}}\right)    
\end{equation}
are correctly recovered by majority-logic decoding as \(n \to \infty\). In other words, virtually all erasure patterns of weight $w$ are correctable in the asymptotic regime of $n$. \textcolor{black}{In other words, long RM codes of growing availability can be decoded with complexity $O(nt(n))$ and decoding threshold $\delta = n,\, \epsilon = \sqrt[r]{\frac{c \log n}{t(n)}}$.}
\end{theorem}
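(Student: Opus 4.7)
The plan is to mirror the proof of Theorem~\ref{thm:error_weight} step by step, substituting Proposition~\ref{prop:erasure_prob_bec} in place of the Chernoff-derived bound from Theorem~\ref{thm:error_prob}. The key structural change is that the BEC per-symbol failure probability is exactly $(1-x)^{t}$, not $(1-x)^{t/2}$ as on the BSC: there is no factor of one-half coming from a Chernoff exponent. This halving effect is precisely what changes the critical constant from $c \ge 3$ in Theorem~\ref{thm:error_weight} to $c \ge \tfrac{3}{2}$ here.

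First, I would set $p_e \triangleq 1 - \sqrt[r]{c\log n / t}$ so that $x \triangleq (1-p_e)^r = c\log n / t$, and identify the weight class $w = p_e n$ with the right-hand side of~\eqref{eq:erasure_weight}. Combining the union bound with Proposition~\ref{prop:erasure_prob_bec} yields $P_{\mathrm{fail,block}}^{\mathrm{BEC}} \le n (1-x)^{t}$. Writing $\log(1-x) = -x - \varepsilon$ for some $\varepsilon > 0$ and using $tx = c\log n$, this becomes
\[
P_{\mathrm{fail,block}}^{\mathrm{BEC}} \;\le\; \exp\!\bigl((1-c)\log n - t\varepsilon\bigr).
\]

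Next, I would lower-bound the probability of the weight class $A_m = \{\mathbf{e} \in \{0,1\}^n : \mathrm{wt}(\mathbf{e}) = p_e n\}$ using the same binomial estimate as in Theorem~\ref{thm:error_weight} (via $4p_e(1-p_e) \le 1$), obtaining $P(A_m) \ge \Theta(n^{-1/2}) = \exp\!\bigl(-\tfrac{1}{2}\log n + o(1)\bigr)$. Since all weight-$p_e n$ erasure patterns share a common type under $\mathrm{BEC}(p_e)$, the fraction of uncorrectable weight-$w$ patterns satisfies $|B_m|/|A_m| = P(B_m)/P(A_m) \le P_{\mathrm{fail,block}}^{\mathrm{BEC}}/P(A_m)$. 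Combining the two estimates leaves
\[
\frac{|B_m|}{|A_m|} \;\le\; \exp\!\Bigl(\bigl(\tfrac{3}{2}-c\bigr)\log n - t\varepsilon - o(1)\Bigr),
\]
which tends to $0$ whenever $c \ge \tfrac{3}{2}$ and $t(n) \to \infty$, establishing the claim.

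The main technical point—and the one that pinpoints the threshold $c \ge \tfrac{3}{2}$—is the exponent bookkeeping: one must verify that the polynomial factor $\Theta(n^{-1/2})$ from the binomial lower bound combines with the linear-in-$n$ factor from the union bound so that only the coefficient $\tfrac{3}{2}-c$ remains in front of $\log n$. Beyond this, the argument is formally identical to Theorem~\ref{thm:error_weight}: the type-class identity, the inequality $\log(1-x) < -x$ on $(0,1)$, and the union bound all transfer verbatim with $p_e$ replacing $p$ and $r$ replacing $2r$. I therefore anticipate no genuine obstacle—only the need to track the factor-of-two discrepancy between BEC and BSC consistently through each step.
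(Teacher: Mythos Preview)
Your proposal is correct and follows essentially the same approach as the paper's own proof: setting $p_e = 1 - \sqrt[r]{c\log n/t}$, combining the union bound with Proposition~\ref{prop:erasure_prob_bec}, invoking the binomial lower bound $P(A_m)\ge\Theta(n^{-1/2})$, and tracking the exponent to arrive at $\exp\!\bigl((\tfrac{3}{2}-c)\log n - t\varepsilon - o(1)\bigr)$. Your explanation of why the threshold drops from $c\ge 3$ to $c\ge \tfrac{3}{2}$---the absence of the factor $1/2$ in the BEC exponent---is exactly the point the paper makes implicitly.
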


\begin{proof}
The proof follows the same structure as that of Theorem~\ref{thm:error_weight}; we therefore highlight only the essential differences. Set
\begin{align*}
p & \triangleq 1 - \sqrt[r]{\frac{c\log n}{t}},
\qquad\text{so }\ 0 \le p \le 1 \ \\
\text{ and }\
x & \triangleq (1-p)^r = \frac{c\log n}{t}.
\end{align*}
Using Union bound and Proposition~\ref{prop:erasure_prob_bec}, one has the block-failure bound
\[
P_{\mathrm{fail,block}}^{\mathrm{BEC}}
    \;\le\; n (1-x)^{t}.
\]
Using the identity \(\log(1-x) = -x - \varepsilon\) for some \(\varepsilon > 0\) when $x \in (0, 1)$,
\begin{align*}
n(1-x)^t
  & = \exp\!\bigl(\log n - tx - t\varepsilon\bigr)\\
    & = \exp\!\bigl((1-c)\log n - t\varepsilon\bigr)\qquad (\text{since } tx = c\log n.)
\end{align*}
Thus
\[
P_{\mathrm{fail,block}}^{\mathrm{BEC}}
    \le \exp\!\bigl((1-c)\log n - t\varepsilon\bigr).
\]
Let \(A_m\) denote the set of all error patterns of weight \(pn\), and let \(B_m\subseteq A_m\) be the subset on which MLD fails.  
As in the proof of Theorem~\ref{thm:error_weight}, all patterns in \(A_m\) have identical probability under the $\mathrm{BSC}$, and therefore
\[
\frac{|B_m|}{|A_m|}
= \frac{P(B_m)}{P(A_m)}
\le \frac{P_{\mathrm{fail,block}}}{P(A_m)}.
\]
Moreover, by the classical binomial lower bound~\cite[Ch.~10, Lem.~7]{Coding:books/MacWilliamsS77},
\[
P(A_m)
\ge \Theta(n^{-1/2})
= \exp\!\left(-\frac12 \log n + o(1)\right).
\]
Substituting the upper bound for $P_{\mathrm{fail,block}}^{\mathrm{BEC}}$ and the lower bound for $P(A_m)$,
\begin{align}
\label{eq:fraction}
\frac{|B_m|}{|A_m|}
& \;\le\;
\frac{
    \exp\!\bigl((1-c)\log n - t\varepsilon\bigr)
}{
    \exp\!\left(-\frac{1}{2}\log n + o(1)\right)
}\notag \\ 
& =
\exp\!\left(\frac{3}{2}\log n - c\log n - t\varepsilon - o(1)\right).
\end{align}
Because \(c \ge \frac{3}{2}\) and \(\varepsilon > 0\), the exponent tends to \(-\infty\) when $t(n)$ tends to $\infty$ with $n$, and thus
\[
\frac{|B_m|}{|A_m|}
\;\longrightarrow\; 0.
\]

Hence asymptotically, only a vanishing fraction of weight-\(pn\) erasure patterns are uncorrectable, proving that virtually all such patterns are correctly decoded.
\end{proof}

\begin{remark}
Comparing Theorems~\ref{thm:erasure_prob} and~\ref{thm:error_prob}, we observe that the BER over the $\mathrm{BEC}$ decays at \emph{twice} the exponential rate of the corresponding failure probability over the $\mathrm{BSC}$. Indeed, BER of the $\mathrm{BEC}$ behaves like \((1-k_1)^t\), whereas that of the $\mathrm{BSC}$ behaves like \((1-k_2)^{t/2}\), where
\[
k_1 = (1-p_e)^r
\qquad\text{and}\qquad
k_2 = (1-2p_f)^{2r}
\]
are fixed constants.  
Thus, majority-logic decoding is exponentially more robust to erasures than to bit flips. This two-fold difference in decay rates manifests itself in the asymptotic gap between the number of correctable erasures in Theorem~\ref{thm:erasure_weight} (on the order of \(n\)) and the number of correctable errors in Theorem~\ref{thm:error_weight} (on the order of \(n/2\)).  
This observation is consistent with the classical principle in coding theory that codes can typically \emph{correct twice as many erasures as errors}.
\end{remark}

\section{Simulations}
\label{sec:simulation}
To validate the theoretical bounds derived for MLD over the BSC, we performed simulations analyzing both BER and BLER. The simulations utilized a fixed locality parameter of $r=4$ while varying the availability $t(n)$ across different growth regimes relative to the block length $n$. \textcolor{black}{We intentionally choose a relatively large crossover probability \(p_f\) to demonstrate the robustness of MLD against high-weight error patterns when sufficient availability is provided.} The plotted numerical values are each averaged over $5 \times 10^6$ runs.

\subsection{Bit Decoding Failure Probability}
Figure~\ref{fig:bit_dec_error} compares the empirical bit decoding failure probability (BER) against the theoretical upper bounds established in Theorem~\ref{thm:error_prob} for a BSC with bit-flip probability $p_f = 0.2$. We examined three distinct regimes for the availability $t$: linear growth ($t=n/4$), polylogarithmic growth ($t=(\log_2 n)^2$), and sub-logarithmic growth ($t=\sqrt{\log_2 n}$). 

The results demonstrate a clear threshold behavior dependent on the growth rate of $t$:

\begin{itemize}
   \item \textbf{Linear and Polylogarithmic Regimes:} In cases where $t$ grows linearly or as $(\log_2 n)^2$, the empirical failure probability decays rapidly to zero as $n$ increases, confirming that MLD successfully corrects errors when the number of recovery sets is sufficiently large. Notably, the empirical performance is consistently superior to the theoretical upper bounds, though the gap narrows as $n$ grows. The tightness of the bound is directly related to the growth rate of $t(n)$; for the linear case ($t = n/4$), the theoretical bound closely approximates the empirical probability for $n \ge 2^{12}$, whereas it provides a much looser estimate in the slower-growth regimes (e.g., $t = \sqrt{\log n}$). Note that as mentioned in Remark~\ref{rm:tightness}, although being exponentially tight, the theoretical upper bound can be further refined by incorporating a multiplicative pre-factor.
    \item \textbf{Sub-logarithmic Regime:} \textcolor{black}{For the slowly growing availability \(t=\sqrt{\log_2 n}\), the empirical decoding failure probability does not appear to vanish over the simulated range. Instead, it saturates at a relatively high error floor (approximately \(0.2\) to \(0.4\)), suggesting that sub-logarithmic availability may be insufficient for reliable error correction via majority voting at practical blocklengths. \textcolor{black}{Meanwhile, the theoretical upper bound in this regime remains close to \(1\) for the considered blocklengths, but it does decrease with \(n\); its decay is much slower (and therefore less apparent in the plot) than in the linear and polylogarithmic regimes.}}
\end{itemize}
\subsection{Block Decoding Failure Probability}
Figure~\ref{fig:block_dec_error} focuses on the upper bound of the BLER, derived via the union bound on the Bit Error Rate (BER) (i.e., $P_{\mathrm{fail,block}} \le n P_{\mathrm{fail,bit}}$) for a BSC with $p_f = 0.13$. Since this BLER is derived from a union bound, the resulting values may exceed both the true BLER and even $1$, which is reflected in the vertical axis of the plot. We do not report empirical BLER values, as their computation becomes infeasible for large block lengths: verifying correct block decoding requires checking the correctness of all symbols, each of which involves an increasing number of recovery sets as $n$ grows. This figure provides a fine-grained analysis of the polylogarithmic regime by varying the exponent of the logarithmic growth, specifically $t=(\log_2 n)^{\alpha}$ for $\alpha \in \{1.8, 1.9, 2.05\}$, 
and highlights the sensitivity of block-level performance to the exponent \(\alpha\). For smaller values of \(\alpha\) (e.g., \(\alpha = 1.8\)), the BLER bound remains relatively large over the range of blocklengths considered and decreases only at very large \(n\), indicating that such growth of \(t(n)\) is insufficient for practical reliability. In contrast, slightly larger exponents (e.g., \(\alpha \ge 1.9\)) lead to a rapid decay of the failure probability, illustrating that the availability must scale with a sufficiently large power of \(\log n\) to offset the accumulation of error probability as the block length increases.

The simulations confirm that while MLD is robust, its asymptotic reliability on the BSC is strictly governed by the scaling law of the availability $t$. Linear availability provides the strongest protection, whereas polylogarithmic availability requires a sufficient growth exponent to ensure successful block decoding.

\begin{figure}
    \centering
    \includegraphics[width=1\linewidth]{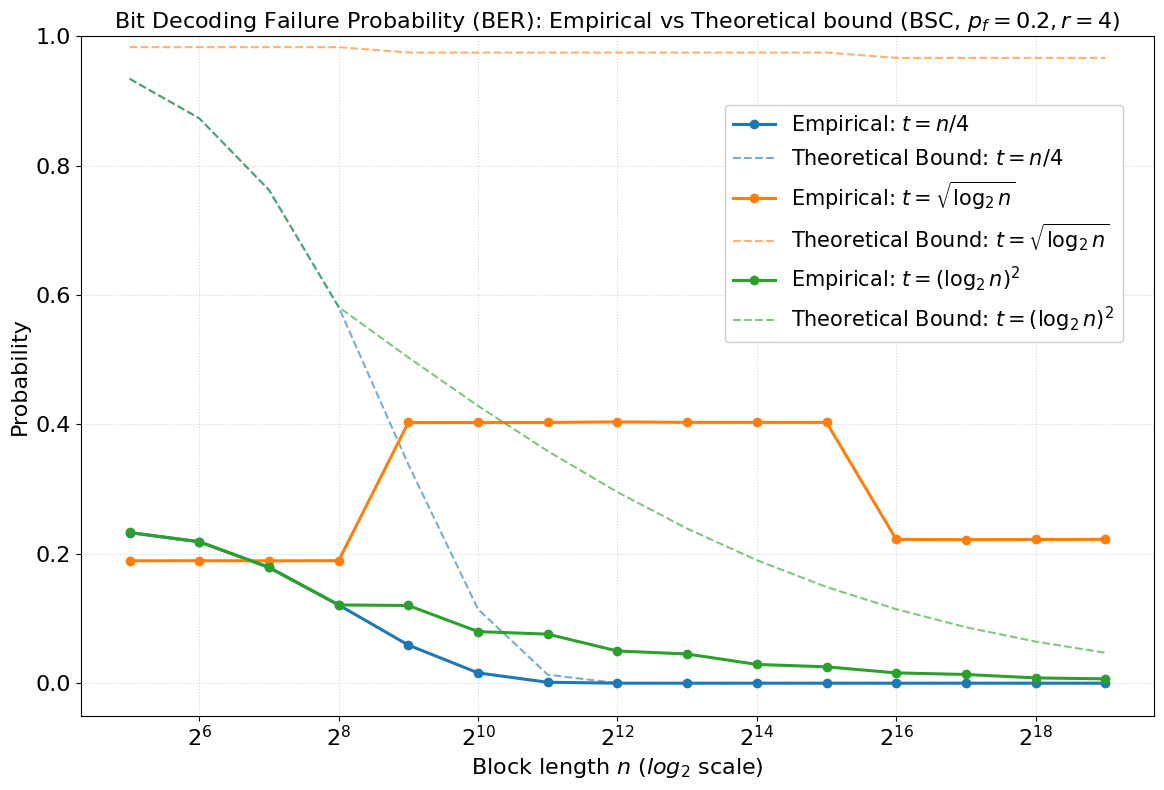}
    \caption{Bit Decoding Failure Probability: Empirical vs Theoretical ($p_f=0.2, r=4$). The plot compares linear, polylogarithmic, and sub-logarithmic availability regimes.}
    \label{fig:bit_dec_error}
\end{figure}

\begin{figure}
    \centering
    \includegraphics[width=1\linewidth]{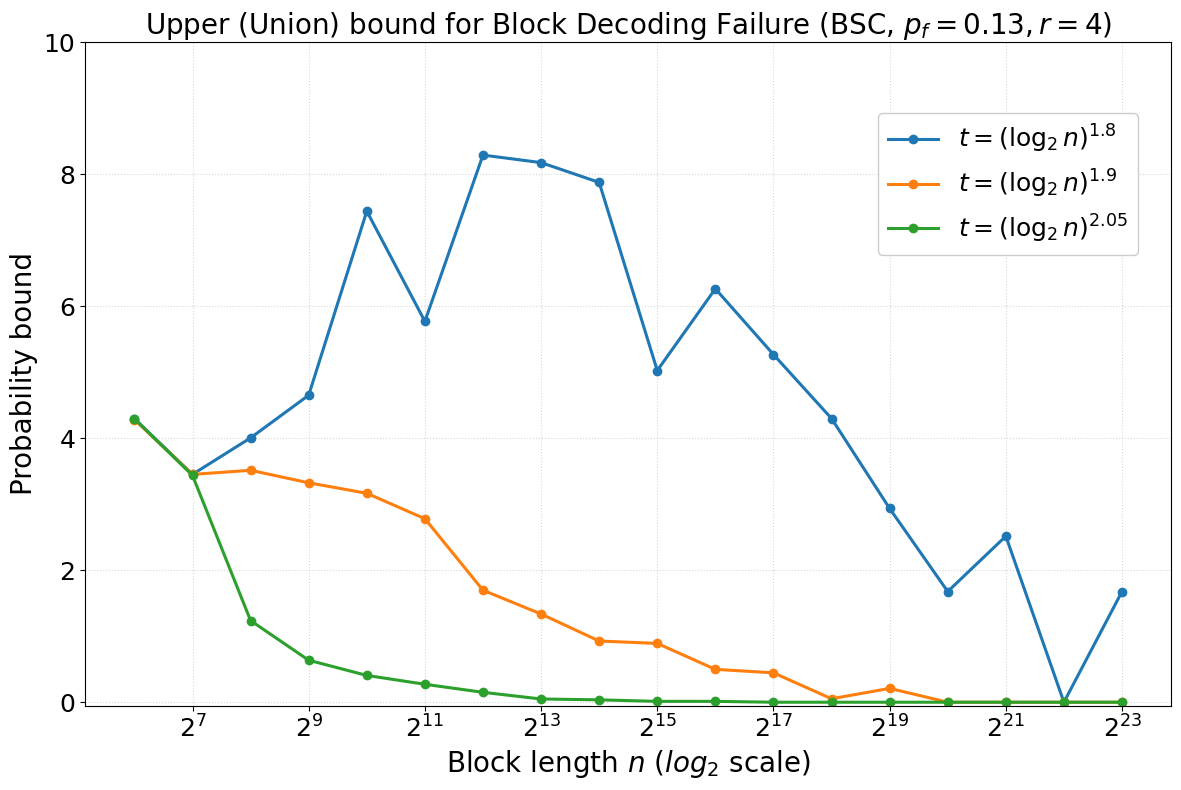}
    \caption{Upper (Union) bound for Block Decoding Failure ($p_f=0.13, r=4$). The plot demonstrates the impact of the logarithmic exponent $\alpha$ on block decoding success.}
    \label{fig:block_dec_error}
\end{figure}

\section{Conclusion}
\label{sec:conclusion}
We analyzed the performance of majority-logic decoding for binary linear locally recoverable codes over stochastic channels. Our results show that, with sufficiently large availability, LRCs achieve vanishing block error probability and can correct a linear fraction of random errors and erasures, despite much weaker worst-case guarantees. This highlights the effectiveness of majority-logic decoding in exploiting the availability structure of LRCs and motivates further study of probabilistic decoding behavior for other locally recoverable code families, especially for $q$-ary codes.

\section*{Acknowledgment}
This work was supported in part by the NSF-BSF grant FET212026.
The authors thank A.~Barg for providing them with the Russian original and its English translation of~\cite{Krichevskiy1970RM}.



\bibliography{bibliography}

@ARTICLE{LRC_large_availability,
  author={Jin, Lingfei and Kan, Haibin and Luo, Yuan and Zhang, Wenqin},
  journal={IEEE Transactions on Information Theory}, 
  title={Binary Locally Repairable Codes With Large Availability and its Application to Private Information Retrieval}, 
  year={2022},
  volume={68},
  number={4},
  pages={2203-2210},
  keywords={Codes;Maintenance engineering;Information retrieval;Protocols;Servers;Upper bound;Costs;Locally repairable codes;private information retrieval;private information retrieval code},
  doi={10.1109/TIT.2022.3144034}}

@ARTICLE{Binary_LRC,
  author={Huang, Pengfei and Yaakobi, Eitan and Uchikawa, Hironori and Siegel, Paul H.},
  journal={IEEE Transactions on Information Theory}, 
  title={Binary Linear Locally Repairable Codes}, 
  year={2016},
  volume={62},
  number={11},
  pages={6268-6283},
  keywords={Tensile stress;Upper bound;Linear codes;Maintenance engineering;Product codes;Phantoms;Electronic mail;Locally repairable codes;cyclic codes;tensor product codes;one-step majority-logic decodable codes},
  doi={10.1109/TIT.2016.2605119}}

@INPROCEEDINGS{Bounds_LRC,
  author={Tamo, Itzhak and Barg, Alexander},
  booktitle={2014 IEEE International Symposium on Information Theory}, 
  title={Bounds on locally recoverable codes with multiple recovering sets}, 
  year={2014},
  volume={},
  number={},
  pages={691-695},
  keywords={Color;Silicon;Upper bound;Availability;Educational institutions;Encoding},
  doi={10.1109/ISIT.2014.6874921}}

@INPROCEEDINGS{OS_MLD_memory_protection,
  author={Dupraz, Elsa and Declercq, David and Vasić, Bane},
  booktitle={2015 Information Theory and Applications Workshop (ITA)}, 
  title={Analysis of {Taylor-Kuznetsov} memory using one-step majority logic decoder}, 
  year={2015},
  volume={},
  number={},
  pages={46-53},
  keywords={Decoding;Reliability;Degradation;Error probability;Memory architecture;Parity check codes;Noise},
  doi={10.1109/ITA.2015.7308965}}

@book{Coding:books/MacWilliamsS77,
  author    = {MacWilliams, F. J. and Sloane, N. J. A.},
  title     = {The Theory of Error-Correcting
Codes},
  publisher = {Elsevier},
  year      = {1977},
  address   = {Amsterdam}
}

@book{concentration,
    author = {Boucheron, Stéphane and Lugosi, Gábor and Massart, Pascal},
    title = {Concentration Inequalities: A Nonasymptotic Theory of Independence},
    publisher = {Oxford University Press},
    year = {2013},
    month = {02},
    isbn = {9780199535255},
    doi = {10.1093/acprof:oso/9780199535255.001.0001},
    url = {https://doi.org/10.1093/acprof:oso/9780199535255.001.0001},
}

@book{Massey1963Threshold,
  author    = {J. L. Massey},
  title     = {Threshold Decoding},
  year      = {1963},
  publisher = {MIT Press},
  address   = {Cambridge, MA},
  series    = {Research Monograph},
  number    = {20}
}

@article{reed1954class,
  title={A class of multiple-error-correcting codes and the decoding scheme},
  author={Reed, Irving},
  journal={Trans.\ of the IRE Professional Group on Inform.\ Theory},
  volume={4},
  number={4},
  pages={38--49},
  year={1954},
  publisher={IEEE}
}

@book{dembo2009large,
  title={Large Deviations Techniques and Applications},
  author={Dembo, Amir and Zeitouni, Ofer},
  volume={38},
  year={2009},
  publisher={Springer Science \& Business Media},
  edition={2nd}
}

@INPROCEEDINGS{MLD_LDPC,
  author={Radhakrishnan, Rathnakumar and Sankaranarayanan, Sundararajan and Vasic, Bane},
  booktitle={2007 IEEE International Symposium on Information Theory}, 
  title={Analytical Performance of One-Step Majority Logic Decoding of Regular {LDPC} Codes}, 
  year={2007},
  volume={},
  number={},
  pages={231-235},
  keywords={Performance analysis;Parity check codes;Bit error rate;Logic gates;Iterative decoding;Equations;Circuit faults;Error correction;Iterative algorithms;Geometry},
  doi={10.1109/ISIT.2007.4557231}}

@article{LRC_availablity_construction,
author = {Micheli, Giacomo and Pallozzi Lavorante, Vincenzo and Shukul, Abhi and Smith, Noah},
title = {Constructions of locally recoverable codes with large availability},
year = {2025},
issue_date = {Aug 2025},
publisher = {Kluwer Academic Publishers},
address = {USA},
volume = {93},
number = {8},
issn = {0925-1022},
url = {https://doi.org/10.1007/s10623-025-01624-w},
doi = {10.1007/s10623-025-01624-w},
abstract = {Let p be a prime number, m be a positive integer, and q=pm. For any fixed locality r such that p∤r(r+1), we construct infinite families of locally recoverable codes with availabilty of nodes lower bounded by q/r!+O(q) and number of locality sets equal to q2/(r+1)!+O(q3/2).},
journal = {Des. Codes Cryptography},
month = apr,
pages = {2931–2945},
numpages = {15},
keywords = {Polynomials over finite fields, Locally recoverable codes, Availability, Galois theory, 11T06, 11T71, 68P30}
}

@ARTICLE{LRC_scaling_locality,
  author={Rawat, Ankit Singh and Papailiopoulos, Dimitris S. and Dimakis, Alexandros G. and Vishwanath, Sriram},
  journal={IEEE Transactions on Information Theory}, 
  title={Locality and Availability in Distributed Storage}, 
  year={2016},
  volume={62},
  number={8},
  pages={4481-4493},
  keywords={Maintenance engineering;Systematics;Hamming distance;Upper bound;5G mobile communication;Indexes;Coding for distributed storage systems;locally repairable codes;codes with availability;rank-metric codes},
  doi={10.1109/TIT.2016.2524510}}

@ARTICLE{Capacity_LRC,
  author={Mazumdar, Arya},
  journal={IEEE Journal on Selected Areas in Information Theory}, 
  title={Capacity of Locally Recoverable Codes}, 
  year={2023},
  volume={4},
  number={},
  pages={276-285},
  keywords={Codes;Memoryless systems;Encoding;Channel capacity;Upper bound;Monte Carlo methods;Error correction codes;Channel capacity;error-correcting codes;local repair;distributed storage},
  doi={10.1109/JSAIT.2023.3300901}}

@book{harris2013digital,
  title     = {Digital Design and Computer Architecture},
  author    = {Harris, David Money and Harris, Sarah L.},
  edition   = {2nd},
  publisher = {Morgan Kaufmann},
  year      = {2013},
  isbn      = {978-0-12-394424-5}
}

@INPROCEEDINGS{arbitrary_locality_availability,
  author={Wang, Anyu and Zhang, Zhifang and Liu, Mulan},
  booktitle={2015 IEEE International Symposium on Information Theory (ISIT)}, 
  title={Achieving arbitrary locality and availability in binary codes}, 
  year={2015},
  volume={},
  number={},
  pages={1866-1870},
  keywords={Information rates;Upper bound;Maintenance engineering;Linear codes;Product codes;Parity check codes;Distributed databases},
  doi={10.1109/ISIT.2015.7282779}}

@article{Codes_distributed_storage,
url = {http://dx.doi.org/10.1561/0100000115},
year = {2022},
volume = {19},
journal = {Foundations and Trends® in Communications and Information Theory},
title = {Codes for Distributed Storage},
doi = {10.1561/0100000115},
issn = {1567-2190},
number = {4},
pages = {547-813},
author = {Vinayak Ramkumar and S. B. Balaji and Birenjith Sasidharan and Myna Vajha and M. Nikhil Krishnan and P. Vijay Kumar}
}

@INPROCEEDINGS{local-error-correction,
  author={Prakash, N. and Kamath, Govinda M. and Lalitha, V. and Kumar, P. Vijay},
  booktitle={2012 IEEE International Symposium on Information Theory Proceedings}, 
  title={Optimal linear codes with a local-error-correction property}, 
  year={2012},
  volume={},
  number={},
  pages={2776-2780},
  keywords={Linear code;Concatenated codes;Upper bound;Silicon;Hamming weight;Systematics;Parity check codes},
  doi={10.1109/ISIT.2012.6284028}}

@article{One_step_RM,
author={Hoang Ly and Emina Soljanin},      
title={Optimum 1-Step Majority-Logic Decoding of Binary {Reed-Muller} Codes}, 
journal = {arXiv preprint arXiv:2508.08736},      
year={2025},
      url={https://arxiv.org/abs/2508.08736} 
}

@article{Krichevskiy1970RM,
  author  = {Krichevskiy, R. E.},
  title   = {On the Number of Reed--Muller Code Correctable Errors},
  journal = {Doklady Akademii Nauk SSSR},
  volume  = {191},
  pages   = {541--547},
  year    = {1970},
  language = {Russian}
}

@book{Coding:books/PetersonW72,
  author    = {Peterson, W. W. and Weldon, Jr. E. J.},
  title     = {Error-Correcting Codes},
  publisher = {The MIT Press},
  year      = {1972},
  edition   = {2nd}
}

@ARTICLE{MLD_memory,
  author={Liu, Shih-Fu and Reviriego, Pedro and Maestro, Juan Antonio},
  journal={IEEE Trans.\ on Very Large Scale Integration (VLSI) Systems}, 
  title={Efficient Majority Logic Fault Detection With Difference-Set Codes for Memory Applications}, 
  year={2012},
  volume={20},
  number={1},
  pages={148-156},
  keywords={Decoding;Parity check codes;Equations;Mathematical model;Fault detection;Complexity theory;Error correction codes;Block codes;difference-set;error correction codes (ECCs);low-density parity check (LDPC);majority logic;memory},
  doi={10.1109/TVLSI.2010.2091432}}

@ARTICLE{Class_Of_MLDcodes,
  author={Lin, Shu and Markowsky, George},
  journal={IBM Journal of Research and Development}, 
  title={On a Class of One-Step Majority-Logic Decodable Cyclic Codes}, 
  year={1980},
  volume={24},
  number={1},
  pages={56-63},
  keywords={},
  doi={10.1147/rd.241.0056}}

@article{RMdecoding:journals/tcom/BertramHH13,
  author       = {Juliane Bertram and
                  Peter Hauck and
                  Michael Huber},
  title        = {An Improved Majority-Logic Decoder Offering Massively Parallel Decoding
                  for Real-Time Control in Embedded Systems},
  journal      = {{IEEE} Trans. Commun.},
  volume       = {61},
  number       = {12},
  pages        = {4808--4815},
  year         = {2013},
  url          = {https://doi.org/10.1109/TCOMM.2013.102413.130109},
  doi          = {10.1109/TCOMM.2013.102413.130109},
  timestamp    = {Tue, 01 Sep 2020 13:17:43 +0200},
  biburl       = {https://dblp.org/rec/journals/tcom/BertramHH13.bib},
  bibsource    = {dblp computer science bibliography, https://dblp.org}
}

@article{Grassmannian_code_MJ,
title = {Point-line incidence on {G}rassmannians and majority logic decoding of {G}rassmann codes},
journal = {Finite Fields and Their Applications},
volume = {73},
pages = {101843},
year = {2021},
issn = {1071-5797},
doi = {https://doi.org/10.1016/j.ffa.2021.101843},
url = {https://www.sciencedirect.com/science/article/pii/S107157972100037X},
author = {Peter Beelen and Prasant Singh},
keywords = {Grassmann codes, Majority logic, Point-line incidence, Grassmann varieties, Schubert varieties},
abstract = {In this article, we consider the decoding problem of Grassmann codes using majority logic. We show that for two points of the Grassmannian, there exists a canonical geodesic between these points once a complete flag is fixed. These geodesics are used to construct a large set of parity checks orthogonal on a coordinate of the code, resulting in a majority decoding algorithm.}
}

@ARTICLE{RMdecoding:journals/tit/Chen71,
  author={Chin-Long Chen},
  journal={IEEE Trans.\ on Inform.\ Theory}, 
  title={On majority-logic decoding of finite geometry codes}, 
  year={1971},
  volume={17},
  number={3},
  pages={332-336},
  keywords={Codes;Decoding;Geometry;Welding;Complexity theory;Null space;Finite element analysis},
  doi={10.1109/TIT.1971.1054629}}

@ARTICLE{Gopalan,
  author={Gopalan, Parikshit and Huang, Cheng and Simitci, Huseyin and Yekhanin, Sergey},
  journal={IEEE Transactions on Information Theory}, 
  title={On the Locality of Codeword Symbols}, 
  year={2012},
  volume={58},
  number={11},
  pages={6925-6934},
  keywords={Vectors;Systematics;Redundancy;Parity check codes;Linear code;Silicon;Block codes;linear codes},
  doi={10.1109/TIT.2012.2208937}}

@ARTICLE{multiple_erasures,
  author={Wang, Anyu and Zhang, Zhifang},
  journal={IEEE Transactions on Information Theory}, 
  title={Repair Locality With Multiple Erasure Tolerance}, 
  year={2014},
  volume={60},
  number={11},
  pages={6979-6987},
  keywords={Maintenance engineering;Linear codes;Information rates;Vectors;Measurement;Generators;Complexity theory;Distributed storage;repair locality;erasure codes;hot data},
  doi={10.1109/TIT.2014.2351404}}

@ARTICLE{soft_decision_Dumer,
  author={Dumer, I. and Krichevskiy, R.},
  journal={IEEE Transactions on Information Theory}, 
  title={Soft-decision majority decoding of Reed-Muller codes}, 
  year={2000},
  volume={46},
  number={1},
  pages={258-264},
  keywords={Reed-Muller codes},
  doi={10.1109/18.817523}}

@INPROCEEDINGS{SRR_Design:lySV2025,
  author={Ly, Hoang and Soljanin, Emina},
  booktitle={61th Annual Allerton Conference on Communication, Control, and Computing}, 
  title={{Maximal Achievable Service Rates of Codes and Connections to Combinatorial Designs}}, 
  year={2025},
  volume={},
  number={},
  pages={},
  keywords={},
  doi={},
  journal={arXiv preprint arXiv:2506.16983},
  url={https://arxiv.org/abs/2506.16983}
}

@article{cosmic_rays_aircraft_2025,
  author       = {Chris Baraniuk},
  title        = {Bit flips: How cosmic rays grounded a fleet of aircraft},
  journal      = {BBC Future},
  year         = {2025},
  month        = dec,
  day          = {1},
  url          = {https://www.bbc.com/future/article/20251201-how-cosmic-rays-grounded-thousands-of-aircraft},
  note         = {Accessed: 2026-01-13}
}

@ARTICLE{PIR_LRC_equivalent,
  author={Kadhe, Swanand and Heidarzadeh, Anoosheh and Sprintson, Alex and Koyluoglu, O. Ozan},
  journal={IEEE Journal on Selected Areas in Information Theory}, 
  title={Single-Server Private Information Retrieval Schemes are Equivalent to Locally Recoverable Coding Schemes}, 
  year={2021},
  volume={2},
  number={1},
  pages={391-402},
  keywords={Servers;Privacy;Indexes;Encoding;Databases;Upper bound;Transforms;Private information retrieval;information-theoretic privacy;locally recoverable codes;index coding},
  doi={10.1109/JSAIT.2021.3053579}}

@ARTICLE{5_seq_Golomb,
  author={Choi, Hyojeong and Song, Hong-Yeop},
  journal={IEEE Transactions on Information Theory}, 
  title={Optimal 5-Seq LRCs With Availability From Golomb Rulers}, 
  year={2025},
  volume={71},
  number={3},
  pages={1689-1699},
  keywords={Maintenance engineering;Symbols;Parity check codes;Upper bound;Sufficient conditions;Linear codes;Vectors;Bipartite graph;Spread spectrum communication;Reliability;Locally repairable codes;sequential recovery;Golomb rulers;cyclic planar difference sets},
  doi={10.1109/TIT.2025.3525668}}

@ARTICLE{RM_recursive_Dumer,
  author={Dumer, I.},
  journal={IEEE Transactions on Information Theory}, 
  title={Recursive decoding and its performance for low-rate Reed-Muller codes}, 
  year={2004},
  volume={50},
  number={5},
  pages={811-823},
  keywords={Maximum likelihood decoding;Algorithm design and analysis;Error correction;Error probability;Polynomials;Error correction codes;Communication system control;Encoding;Veins},
  doi={10.1109/TIT.2004.826632}}

@book{Csiszár_Körner_2011, place={Cambridge}, edition={2}, title={Information Theory: Coding Theorems for Discrete Memoryless Systems}, publisher={Cambridge University Press}, author={Csiszár, Imre and Körner, János}, year={2011}}

@articleInfo{product_code,
title = {An approach to the performance of SPC product codes on the erasure channel},
journal = {Advances in Mathematics of Communications},
volume = {10},
number = {1},
pages = {11-28},
year = {2016},
issn = {1930-5346},
doi = {10.3934/amc.2016.10.11},
url = {https://www.aimsciences.org/article/id/0d6ac0f5-92c6-46d7-b68d-214f19e46307},
author = {Sara D.  Cardell and Joan-Josep  Climent},
keywords = {Erasure channel, SPC code, erasure pattern, bipartite graph, connected component}
}

@ARTICLE{Tornado_codes,
  author={Luby, M.G. and Mitzenmacher, M. and Shokrollahi, M.A. and Spielman, D.A.},
  journal={IEEE Transactions on Information Theory}, 
  title={Efficient erasure correcting codes}, 
  year={2001},
  volume={47},
  number={2},
  pages={569-584},
  keywords={Error correction coding},
  doi={10.1109/18.910575}}

@INPROCEEDINGS{LT_code,
  author={Luby, M.},
  booktitle={The 43rd Annual IEEE Symposium on Foundations of Computer Science, 2002. Proceedings.}, 
  title={LT codes}, 
  year={2002},
  volume={},
  number={},
  pages={271-280},
  keywords={Decoding;Tornadoes;Encoding;Aggregates;Costs;Computer science},
  doi={10.1109/SFCS.2002.1181950}}

@ARTICLE{Raptor_code,
  author={Shokrollahi, A.},
  journal={IEEE Transactions on Information Theory}, 
  title={Raptor codes}, 
  year={2006},
  volume={52},
  number={6},
  pages={2551-2567},
  keywords={Decoding;Internet;Protocols;Error probability;Fault tolerance;Computer networks;Error analysis;Large-scale systems;Buffer overflow;Binary erasure channel (BEC);graphical codes;LT-codes;networking},
  doi={10.1109/TIT.2006.874390}}

@INPROCEEDINGS{LRC_download_time_analysis,
  author={Kadhe, Swanand and Soljanin, Emina and Sprintson, Alex},
  booktitle={2015 IEEE International Symposium on Information Theory (ISIT)}, 
  title={Analyzing the download time of availability codes}, 
  year={2015},
  volume={},
  number={},
  pages={1467-1471},
  keywords={Maintenance engineering;Systematics;Silicon;Delays;Aggregates;Queueing analysis;Encoding;distributed storage;locally repairable codes;fork-join queues;MDS codes},
  doi={10.1109/ISIT.2015.7282699}}

@ARTICLE{interleaver,
  author={Shi, Y.Q. and Xi Min Zhang and Zhi-Cheng Ni and Ansari, N.},
  journal={IEEE Circuits and Systems Magazine}, 
  title={Interleaving for combating bursts of errors}, 
  year={2004},
  volume={4},
  number={1},
  pages={29-42},
  keywords={Interleaved codes;Error correction codes;Block codes;Redundancy;Circuits and systems;Circuit noise;Bandwidth;Interference;Convolutional codes;Binary sequences},
  doi={10.1109/MCAS.2004.1286985}}

@ARTICLE{irregular_LDPC,
  author={Luby, M.G. and Mitzenmacher, M. and Shokrollahi, M.A. and Spielman, D.A.},
  journal={IEEE Transactions on Information Theory}, 
  title={Improved low-density parity-check codes using irregular graphs}, 
  year={2001},
  volume={47},
  number={2},
  pages={585-598},
  keywords={Error correction coding},
  doi={10.1109/18.910576}}
\bibliographystyle{IEEEtran}

\end{document}